\newtheorem {lemma}{Lemma}
\newtheorem {theorem}{Theorem}
\newcommand{\R}{{\mathbb{R}}}
\renewcommand{\v}[1]{{\boldsymbol{#1}}} 
\newcommand{\m}[1]{{\boldsymbol{#1}}} 
\newcommand{\vb}{\v{b}}
\newcommand{\vc}{\v{c}}
\newcommand{\vd}{\v{d}}
\newcommand{\ve}{\v{e}}
\newcommand{\vf}{\v{f}}
\newcommand{\vm}{\v{m}}
\newcommand{\vphi}{\v{\phi}}
\newcommand{\vpsi}{\v{\psi}}
\newcommand{\vrho}{\v{\rho}}
\newcommand{\vr}{\v{r}}
\newcommand{\vs}{\v{s}}
\newcommand{\vt}{\v{t}}
\newcommand{\vu}{\v{u}}
\newcommand{\vtheta}{\v{\theta}}
\newcommand{\vx}{\v{x}}
\newcommand{\vxi}{\v{\xi}}
\newcommand{\vxp}{\v{x}_p}
\newcommand{\vy}{\v{y}}
\newcommand{\vz}{\v{z}}
\newcommand{\ma}{\m{A}}
\newcommand{\mb}{\m{B}}
\newcommand{\mh}{\m{H}}
\newcommand{\mi}{\m{I}}
\newcommand{\mgamma}{\m{\Gamma}}
\newcommand{\mk}{\m{K}}
\newcommand{\mm}{\m{M}}
\newcommand{\mq}{\m{Q}}
\newcommand{\psdn}{\Phi_n} 
\newcommand{\psdf}{\Phi_f} 
\newcommand{\psd}{\Phi} 
\newcommand{\psdg}{\Phi_{\text{gr}}} 
\newcommand{\psdr}{{\widetilde\Phi}} 
\newcommand{\psdk}{\Phi^{vK}} 
\newcommand{\psdgr}{{\widetilde\Phi}_{\text{gr}}} 
\newcommand{\vpsd}{\vphi} 
\newcommand{\cpsd}{\phi} 
\newcommand{\cor}{C}
\newcommand{\NA}{---}
\newcommand{\T}{{\text{T}}} 
\newcommand{\Ft}{{\mathcal F}}
\newcommand{\norm}[1]{\left\|#1 \right\|}
\newcommand{\abs}[1]{\left| #1\right|}
\def \E {{\mathbb{E}}}
\def\ap{D}
\DeclareMathOperator{\sinc}{sinc}
\DeclareMathOperator{\diag}{diag}
\newcommand{\var}{\alpha}
\title{Atmospheric turbulence profiling with unknown power spectral density}
\author{Tapio Helin}
\author{Stefan Kindermann}
\address{
Department of Mathematics and Statistics\\
University of Helsinki\\
Gustaf H\"allstr\"omin katu 2b\\
FI-00014 Helsinki\\
Finland\\}
\email{Tapio.Helin@helsinki.fi and Jonatan.Lehtonen@helsinki.fi}
\author{Jonatan Lehtonen}
\author{Ronny Ramlau}
\address{
Industrial Mathematics Institute \\
Johannes Kepler University \\
Altenbergerstra{\ss}e 69 \\
A-4040 Linz \\ Austria \\}
\email{Stefan.Kindermann@jku.at and Ronny.Ramlau@jku.at}
\date{\today}
\begin{document}

\begin{abstract}
Adaptive optics (AO) is a technology in modern ground-based optical telescopes 
to compensate for the wavefront distortions caused by atmospheric turbulence. One method that allows 
to retrieve information about the atmosphere from telescope data is 
so-called SLODAR, where the atmospheric turbulence profile is estimated based on correlation data of 
Shack--Hartmann wavefront measurements.
This approach relies on a layered Kolmogorov turbulence model.
In this article, we propose a novel extension of the SLODAR concept
by including a general non-Kolmogorov turbulence layer close to the ground with an unknown power spectral density. We prove that the joint estimation problem of the turbulence profile above ground simultaneously with the unknown power spectral density at the ground is ill-posed and propose three numerical reconstruction methods. 
We demonstrate by numerical simulations that our methods lead to  substantial improvements in the turbulence profile reconstruction compared to the standard SLODAR-type approach. Also, our methods can accurately locate local perturbations in non-Kolmogorov power spectral densities.
\end{abstract}
\maketitle

\section{Introduction}

Adaptive optics (AO) systems are designed to improve the imaging quality of ground-based optical telescopes by providing 
real-time compensation for the unwanted optical aberrations generated by atmospheric turbulence \cite{tyson2015}. Many next-generation 
AO systems aim to produce a diffraction-limited resolution in a large field of view. Based on observations of incoming light 
from several sources (guide stars), these wide-field AO systems estimate the turbulence volume (refractive index fluctuations) above the telescope before optimizing the optical correction by deformable mirrors. The crux of this challenge is a severely ill-posed inverse problem called
\emph{atmospheric tomography} \cite{ellerbroek2009}, where a three-dimensional scalar function describing the turbulence is estimated based on modified integral data over the atmosphere within milliseconds. 
Due to the extremely small angle of view (around 1--7 arcmin) and limited computational resources, any successful solution strategy in atmospheric tomography is based on reliable statistical modeling of the turbulence and effective discretization of the atmosphere. Our work aims at improving the tomographic 
reconstructions by optimizing such prior information empirically.

Atmospheric turbulence tends to be concentrated at the boundaries of different air flows and, consequently,
it can be described to a good approximation by a superposition of thin turbulent slabs, i.e., two-dimensional layers.
The tomographic reconstruction requires prior information on how to 
weight and locate these layers on the vertical axis. For an effective setup one needs to estimate the vertical \emph{turbulence profile}, i.e., the distribution of turbulent energy across different altitudes. Furthermore, there is an ongoing discussion and uncertainty on how to statistically model turbulence on these layers. In this paper we propose a method for optimizing the turbulence profile and statistical models simultaneously based on a time-series of data from a typical AO system. In particular, our novel contribution is to reconstruct the \emph{power spectral density} of the turbulence from an ill-posed integral equation.

In the past turbulence profiles have been traditionally measured using independent instruments but due to a number of advantages there is a need to estimate the profile within the AO system \cite{Gilles10}.
In fact, profiling based on the AO 
data can be performed by methods similar to SLOpe Detection And Ranging (SLODAR, see \cite{Wilson02}).
These methods utilize spatial correlations in the observations of incoming light (wavefront sensor measurements).
The first approach deduces the profile from the cross-correlations by deconvolution with the autocorrelation of the data \cite{Wilson02, Wang08}. The second approach \cite{guesalaga2017, Gilles10, cortes2012, Guesalaga2014, Butterley06,Vedrenne07} forms a linear dependency between the spatial cross-correlations and the vertical turbulence profile by assuming that the turbulence statistics at any altitude is accurately described by the Kolmogorov or von K\'arm\'an model \cite{panchev2016, kolmogorov1941, kolmogorov1962}.

This paper grows out of a number of experiments which confirm significant deviations of the turbulence statistics from the classical models in certain portions of the atmosphere
\cite{rao1999measuring, wu2014measurement, rao2000spatial, beland1995some}. 
In particular, a strong deviation would imply that the underlying assumption regarding the turbulence statistics in
the second SLODAR-based approach is violated and can lead to inaccurate profiling.
Deviations from classical models have been well-documented close to the ground \cite{Buser1971, Dayton1992, Bester1992} and in the upper troposhere and stratosphere \cite{kyrazis1994measurement, belen1999experimental, stribling1995optical}.
Moreover, the turbulence taking place in the telescope dome (so-called \emph{dome seeing}) is well-known to have a spectrum that deviates strongly from the Kolmogorov power law \cite{Guesalaga2014}. 
In consequence, there is a clear need to develop the current profiling methods further to take into account the uncertainty of the underlying turbulence statistics. 

Adaptive optics is strongly influenced by the turbulence at low altitudes since typically the turbulent layer with most energy is located close to the ground \cite{tyson2015}.
Our novel method estimates the turbulence profile \emph{and}
the full power spectral density of the so-called ground layer simultaneously based on the empirical time-averaged spatial correlations from the AO data similar to previous papers such as \cite{Gilles10, Butterley06}. In particular, the proposed method utilizes the data more effectively than previous methods.
Whereas the second approach described above leads to a linear overdetermined problem, 
the additional effort of estimating the ground layer turbulence makes the problem ill-posed. 
Thus, it is of vital importance to employ an appropriate regularization and \emph{a priori} information to obtain reasonable results.

We propose three different numerical methods 
and demonstrate their performance by simulations. 
The first method can be viewed as parametric estimation, where regularization is 
carried out by strict parametrization of the power spectral density function.
This approach aims to efficiently reconstruct a turbulence
power spectral density that is given by a non-Kolmogorov power law.
The other two approaches are non-parametric 
regularization schemes, one based on Tikhonov regularization and the other on total variation, which we call \emph{two-step methods}. 
A major challenge with the non-parametric approach is the polynomial decay of the power spectral density and how to scale the effect of regularization. Initially, these methods approximate the power spectral density with a fixed power law by executing the first method. Next, the obtained power law is used as a scaling factor in the regularization functional.
We demonstrate that the two-step methods lead to a marked improvement in the turbulence profile reconstruction, compared to the reconstructions obtained by assuming Kolmogorov turbulence in the ground layer. Also, our methods can accurately locate local perturbations in non-Kolmogorov power spectral densities. A standard example of such perturbation is the viscosity bump \cite{Hill1978} close to the upper inertial bound. 
A rough estimate of the spatial frequency regime visible to the current wavefront sensor technology is discussed based on the numerical experiments. All numerical simulations are carried out with the MOST simulation environment developed by the AAO team at JKU Linz, Austria.

Let us briefly review some related work. The idea of 
extracting additional information from the cross-correlation data
in SLODAR-type profiling methods is not new. 
In \cite{guesalaga2017} the altitude-dependent outer scale is estimated simultaneously with the turbulence profile based on von K{\'a}rman turbulence.  This paper is particularly interesting for our work due to the altitude-dependent behavior of the power spectral density.
Let us also mention that the SLODAR-type methods can be used to perform wind profiling by
utilizing time-delayed cross-correlations between
all possible combinations of wavefront measurements from
the available wavefront sensor data \cite{Wang08, Guesalaga2014}.
A method for parametric estimation of the power law exponent of a non-Kolmogorov turbulence structure function is proposed in \cite{rao1999measuring}.
However, this approach assumes similar power spectral density throughout the atmosphere and does not include profiling.
For a general perspective on profiling, see \cite{guesalaga2015, rodriguez2014}. 
We also point out a method proposed by the authors in \cite{helin2017}, where the turbulence profile is optimized as part of the atmospheric tomography problem.
Finally, let us note that in this paper we make the crucial assumption that the turbulence is a statistically isotropic and homogeneous random field. In practice, this may not be the case for the dome seeing. The physics and statistical law of turbulence in the telescope dome is still a matter of research \cite{lombardi2010}. 

This paper is organized as follows. In section \ref{sec:profiling} we briefly discuss the turbulence statistics and describe the starting point for this study: a SLODAR-type
atmospheric profiling method introduced in \cite{Butterley06} and extended to laser guide stars in \cite{Gilles10}. 
Section \ref{sec:inverse_problem} covers the identification of the ground layer and what properties this ill-posed problem has. We show that the related forward mapping is compact and establish certain approximation properties.
Numerical methods and their implementation are introduced in section \ref{sec:numerical_methods}, while section \ref{sec:simulations} discusses and demonstrates the performance of the methods in simulations.

\section{Atmospheric profiling}
\label{sec:profiling}

In this section we describe the SLODAR-type method for laser guide stars (LGS) based on the paper 
by Gilles and Ellerbroek \cite{Gilles10}. We begin with an introduction to the method and 
derive  the mathematical formulas that lead to a matrix equation for the atmospheric turbulence profile.

Let us briefly define the relevant notation. A random field $f : \R^n \to L^2(\Omega; \R)$ on a complete
probability space $(\Omega, \mathbb{P})$ is called \emph{stationary}, if its mean function $\mathbb{E} f$ is constant
and the autocorrelation function
$C_f({\bf x},{\bf y}) = \mathbb{E} (f({\bf x})  f({\bf y}))$ depends only on the difference ${\bf x}-{\bf y}$.
Moreover, $f$ is called \emph{isotropic} if $C_f$ depends only on the distance $|{\bf x}-{\bf y}|$. 
A structure function $D_f$ is defined according to
\begin{equation*}
	D_f(r) = \mathbb{E}(f({\bf x}+{\bf r})-f({\bf x}))^2,
\end{equation*}
where $r = |{\bf r}|$. For stationary and isotropic $f$, $D_f$ does not depend on $\vx$ or the direction of $\vr$.
Below, we mainly consider the \emph{power spectral density} (PSD) representation. Let us define the Fourier transform $\mathcal F \phi$ of a function $\phi$ through
\[ (\mathcal F \phi) (\vxi) = \int_{\R^n} e^{-2 \pi i \vx \cdot \vxi} \phi(\vx) d\vxi. \]
Then, by the Wiener--Khintchine theorem, the autocorrelation function $\cor_f$ of the random field $f$ coincides with the inverse Fourier transform of the PSD $\psdf$ \cite{Conan2008}:
\begin{equation*}
\cor_f(\vr) = (\mathcal F \psdf)^{-1} (\vr) = \int_{\R^n} e^{2 \pi i \vr \cdot \vxi} \psdf(\vxi) d\vxi.
\end{equation*}
In addition, we will denote the correlation of two random fields $f$ and $g$ by $C_{f,g}({\bf x},{\bf y}) = \mathbb{E}(f(\vx) g(\vy))$.

Above and throughout the rest of this article we use bold symbols to denote vectors, matrices, and vector-valued functions.

\subsection{Atmospheric turbulence}
\label{sec:turbulence}

Atmospheric turbulence is the highly irregular mixing and dynamics of the air in the atmosphere.
In this regard, atmospheric optics is concerned with how the refractive index of air behaves under turbulent flow, since the perturbations in the phase of passing light are proportional to the refractive index fluctuations. 
The spatial statistics of turbulence can be analyzed by the energy cascade principle and scaling arguments;
the classical work towards this goes back to Kolmogorov \cite{kolmogorov1941}, who was first to develop a model for turbulence structure functions. 

The refractive index of air is a function of temperature and humidity.
However, at astronomical sites, temperature is by far the dominating factor. 
Therefore, the statistics of refractive index fluctuations can be identified based on
the subsequent work by Tatarskii \cite{tatarskii}, who derived the structure function of the temperature fluctuations.
The key approximation relevant to atmospheric optics is that these fluctuations can be represented by an isotropic and stationary Gaussian random field.

The Kolmogorov model asserts that the structure function increases according to a $2/3$-power law within the inertial range $(\ell_0, L_0)$. Here, $\ell_0$ and $L_0$ are the characteristic sizes of the smallest and largest turbulent eddies in the atmosphere, respectively. The model is not realistic outside the inertial range and several adjustments have since been proposed in literature. Von K{\'a}rm{\'a}n proposed a model taking into account the so-called saturation regime of large distances.
The corresponding power spectral density of the refractive index fluctuations $n=n({\bf x})$ is given by
\begin{equation}
\label{eq:vonkarman}
	\psdn(\vxi, h) = a \rho(h)(\abs\vxi^2 + L_0^{-2})^{-11/6},
\end{equation}
where $\vxi$ is the spatial frequency and $a \approx 9.7 \times 10^{-3}$. An accurate description of the constant is given in \cite{Conan2008}.
The index of refraction structure constant $\rho = \rho(h)$ (also called the $C_n^2$-profile) is a function of altitude and is the object of interest for profiling methods.

As mentioned above, there is now available a significant body of evidence that conflicts with the conventional statistical model of turbulence based on various measurements of turbulence under a variety of conditions. The authors of \cite{toselli2008}
consider generalized models with varying power law in \eqref{eq:vonkarman} such that
\begin{equation}
	\label{eq:gen_vonkarman}
	\psdn(\vxi, h, \gamma) = a(\gamma) \rho(h)(\abs\vxi^2 + L_0^{-2})^{-\gamma},
\end{equation}
where $a(\gamma)$ is a constant normalizing the energy and $3 < \gamma < 4$.

In the following, our main object of interest, the power spectral density (PSD) for the phase fluctuations of light waves passing through a turbulent atmosphere is denoted by $\psd$. It is closely related to $\psdn$ physics, differing by a multiplicative constant. 
Notice carefully that later we consider general PSDs beyond power laws such as in \eqref{eq:vonkarman} or \eqref{eq:gen_vonkarman}. To this end, we have to take extreme care how the relative strength $\rho(h)$ is interpreted in our setting. 
We fix the following convention related to notations:
In the general discussion, we write $\psd = \psd(\vxi, h)$
to distinguish the altitude dependency of the PSD when relevant to context.
When the statistics is governed by the von K{\'a}rm{\'a}n model
we write $\psd(\vxi, h) = \rho(h) \psdk(\xi)$. Here, 
 $\psdk$ stands for the von K{\'a}rm{\'a}n PSD 
\begin{equation}
\label{eq:vonkarmanphase}
\psdk(\vxi) = b (\abs\vxi^2 + L_0^{-2})^{-11/6},
\end{equation}
when integrated over the whole atmosphere. Above, $b$ is a constant depending on some atmospheric parameters; we refer the reader to \cite{Conan2008} for details. 
Later, when we study the statistics of the ground layer, we write
$\psd(\vxi, 0) = \psdg(\vxi)$ and embed any constants to the PSD itself.

Finally, similar to \eqref{eq:gen_vonkarman}, we can also define the generalized version of the von K\'arm\'an PSD \eqref{eq:vonkarmanphase} to obtain
\begin{equation}
\label{eq:gen_vonkarmanphase}
\psdk(\vxi, \gamma) = b(\gamma) (\abs\vxi^2 + L_0^{-2})^{-\gamma}.
\end{equation}

\subsection{The setup of SLODAR}
\label{sec:slodar}

The goal of SLODAR-type methods is to determine the turbulence strength at each layer, i.e.,
the \emph{vertical turbulence profile}, by studying the correlation of measurements from two guide stars.
In this paper we consider \emph{laser 
guide stars} (LGS), which are artificially created by firing powerful lasers which scatter in the upper atmosphere 
and effectively create a star at a finite altitude $H$. This means we can choose the location of the guide stars, 
with some few additional adaptions (e.g., the \emph{cone compression}). More details on this can be found in \cite{Gilles10}.

\begin{figure}[ht]
\graphicspath{{images/}}
\centering{
\def\svgwidth{0.4\linewidth}
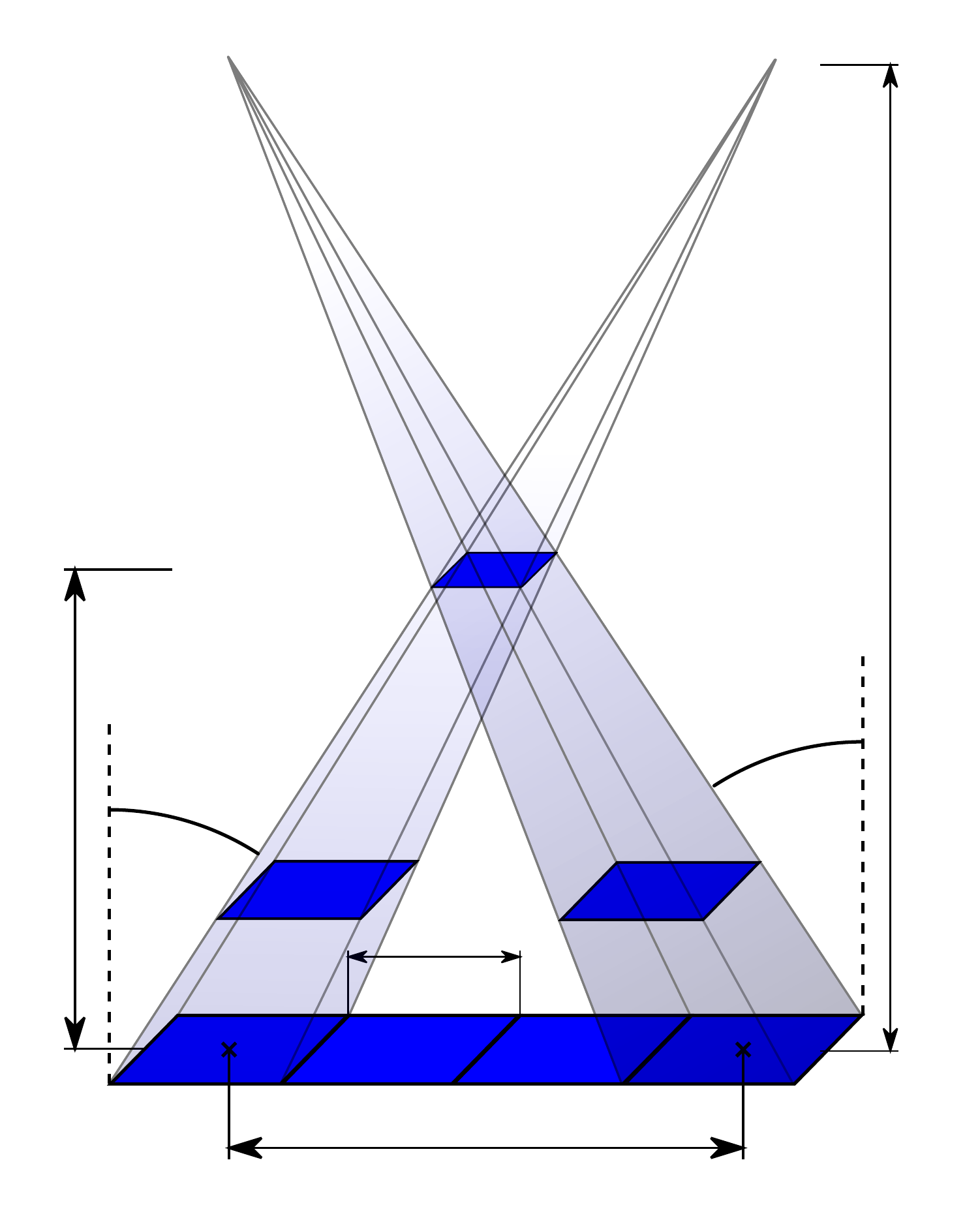
\caption{Illustration of the SLODAR measurement setup, showing a 
row of WFS subapertures measuring wavefronts from two LGSs. The correlation of these
measurements is a sum of the wavefront correlations between two square-shaped 
areas in each layer of turbulence. In particular, these areas coincide in the layer at the altitude $h_k$, 
which depends on subaperture separation $d_k$, LGS altitude $H$ and LGS separation $\theta = \psi - \psi'$.}
\label{fig:slodar}
}
\end{figure}

The wavefront aberrations are often 
observed using so-called Shack--Hartmann wavefront sensors (WFS). These sensors consist of a grid of square subapertures, and each subaperture effectively measures the average slope of incoming wavefronts. 
The WFSs coincide optically with the telescope lens and consequently observe an area as wide as the telescope lens. Moreover, each WFS observes a different direction. This measurement setup is illustrated by Fig.~\ref{fig:slodar}.

Suppose that subapertures have diameter $D$ and
consider the incoming wavefronts observed at subapertures that are optically separated by a distance $d_k = k D$, where $k$ is some integer.
Thus, the wavefronts measured by these subapertures both pass through the same volume of air at altitude $h_k = d_k / (d_k/H + \theta)$, where $H$ is the LGS altitude and $\theta = \psi - \psi'$ the LGS angle of separation.
Similarly, at altitude $h_j$ the separation of these wavefronts is $(d_k - d_j) \eta(h)$, where $\eta(h) = 1 - \tfrac{h}{H}$ is known as the \emph{cone compression factor}. In consequence, the cross-correlation in phase fluctuations of incoming wavefronts at the WFSs is the sum of such phase fluctuation cross-correlations through the turbulence layers.

The SLODAR method reconstructs the turbulence strengths by 
assuming that the turbulence statistics at any altitude is accurately described by the statistical turbulence models described in the previous section. This yields a
connection between the cross-correlations of WFS measurements and the turbulence profile, which we describe in detail in the following section. Measuring these correlations for a large number of separations $k = 0, \ldots, N$ determines the turbulence strength at certain altitudes $h_0, \ldots, h_N$.

\subsection{Model for WFS measurements}
\label{sec:wfs_measurements}
In this section we briefly describe an approach to atmospheric turbulence profiling discussed and analyzed in \cite{Gilles10}.
Our starting point is to write the Shack--Hartmann (SH) wavefront sensor measurement 
in the form
\begin{equation*}
\vs(\vxp) = \frac{1}{\ap^2} \int_0^\infty \int_{\R^2} \chi\left(\tfrac{\vx-\vxp}{\ap}\right) (\nabla \varphi)(\eta(h) \vx + h \vpsi, h) \eta(h) d\vx dh\in \R^2,
\end{equation*}
where $\vs(\vxp)$ stands for the SH-measurement at subaperture location $\vxp$ with 
a subaperture size $\ap$. Moreover, $\chi$ is the characteristic function of
the square $\left[-\frac 12,\frac 12\right]^2\subset \R^2$ and $\varphi :\R^2 \times [0,\infty) \to \R$ is 
the layer turbulence density. Above, the guide star is located at direction $\vpsi$ and 
the term $\eta(h) = 1- \tfrac{h}{H}$ is the cone compression factor mentioned in 
section \ref{sec:slodar}, for an LGS at an altitude $H>0$.

The global tip/tilt and focus are poorly measured by LGS WFSs, and therefore the noise in these modes propagates to cross-correlations. 
In \cite{Gilles10}, the authors avoid this unwanted effect by considering instead the following 
local curvature (second-order differences) of the measurements.
Let us write 
\[ m_x(\vxp) = s_x(\vxp - D \ve_x) - 2 s_x(\vxp) + s_x (\vxp + D \ve_x), \]
where $s_x$ is the $x$-coordinate of $\vs = (s_x,s_y)$ and $\ve_x = (1,0)$ is the horizontal unit vector. 
The curvature $m_y(\vxp)$ with $s_y$ and $\ve_y = (0,1)$ is defined analogously. 
For notational simplicity, we 
adopt the common notation $m_\var(\vxp)$ with 
$\var \in \{x, y\}$ to denote $m_x$ and $m_y$, respectively.
 Defining the function $g_h^\var: \R^2 \to \R$ as 
 \begin{equation*}
 	g_h^\var(\vt) = 8 \pi i \eta(h) t_\var \ap^2 \sin^2(\pi \eta(h) \ap t_\var) \sinc(\eta(h) \ap t_1) 
 	\sinc (\eta(h) \ap t_2),  \qquad \var \in \{x, y\}
 \end{equation*}
and following the steps in \cite{Gilles10}, one can 
derive the following expression for $m_\var(\vxp)$:
\begin{equation*}
m_\var(\vxp) =  \frac{1}{\ap^2} \int_0^\infty \int_{\R^2} e^{2\pi i \vxi \cdot (\eta(h) \vxp +  h \vpsi)} g_h^\var(\vxi) 
	(\Ft \varphi)(\vxi,h) d\vxi dh,  \qquad \var \in \{x, y\}.
\end{equation*}

Consider next a pair of LGSs located at directions $\vpsi$ and $\vpsi'$, and denote
the corresponding measurements by $m_\var$ and $m_\var'$, respectively. 
Since the random field $\varphi$ is stationary, we find that the correlation 
\begin{equation*}
	\cor_{m_\var, m_\var'}(\vd) := \E \left[m_\var(\vxp) \overline{m_\var'(\vxp+\vd)}\right]
\end{equation*}
is independent of the spatial location $\vxp$. 

Now suppose that the power spectral density of the phase fluctuations is given by $\psd(\vxi,h)$.
Since layers at different altitudes are statistically independent, we have the formal distributional identity
\begin{equation*}
	\E \left[(\Ft \varphi)(\vxi,h)(\Ft \varphi)(\vxi',h')\right] = \delta(\vxi-\vxi') \delta(h-h') \psd(\vxi,h),
\end{equation*}
where $\delta$ stands for the Dirac delta. We conclude after some calculation using standard Fourier 
analysis that
\begin{equation}\label{eq:cor_cont}
\cor_{m_\var,m_\var'}(\ap\vd) = \frac{1}{\ap^4}
\int_0^\infty \int_{\R^2} 
e^{-2 \pi i \vxi \cdot \eta(h) \ap \vd} e^{2 \pi i \vxi \cdot h \vtheta}   
|g_h^\var(\vxi)|^2 \psd(\vxi,h) d\vxi dh,
\end{equation}
where we have denoted $\vtheta = \vpsi - \vpsi'$.  
In the following we will take $\vd$ to be an integer vector, so that $\ap\vd$ gives the distance between two subapertures. 

Let us next consider discretization of equation \eqref{eq:cor_cont}.
To this end, the first step is to adopt a discrete layer model by expressing the integral over $h$ as a finite sum over $N_L$ atmospheric layers:
\begin{equation}\label{eq:cor_discrete}
\cor_{m_\var,m_\var'}(\ap\vd) = \frac{1}{\ap^4}
\sum_{k=0}^{N_L-1} \int_{\R^2} 
e^{-2 \pi i \vxi \cdot \eta_k \ap \vd} e^{2 \pi i \vxi \cdot h_k \vtheta}   
|g_k^\var(\vxi)|^2 \psd(\vxi,h_k) d\vxi,
\end{equation}
where $h_k$ are discrete altitudes for $k = 0, \ldots, N_L - 1$, and we introduce 
the notation $\eta_k := \eta(h_k)$ and $g_k^\var(\vxi) := g_{h_k}^\var(\vxi).$ 

\subsection{Profiling based on correlation data}

The SLODAR-type method introduced in \cite{Gilles10} is based on the assumption of the von K\'{a}rm\'{a}n model in \eqref{eq:vonkarmanphase} at each altitude.
Given $N_d$ vectors $\vd_j$ and the discretized turbulence profile $\{\rho_k\}_{k=0}^{N_L - 1}$ for $\rho_k := \rho(h_k)$, the equation \eqref{eq:cor_discrete} reduces to problem
\begin{equation}
\label{eq:cor_aux}
\cor_{m_\var,m_\var'}(\ap\vd_j) =: \sum_{k=0}^{N_L-1} A_{jk}^\var \rho_k,
\end{equation}
where $A_{jk}^\var$ is given by
\begin{equation}
\label{eq:submatrix_A}
A_{jk}^\var = \frac{1}{\ap^4}
\left(\int_{\R^2} 
e^{-2 \pi i \vxi \cdot \eta_k \ap \vd_j} e^{2 \pi i \vxi \cdot h_k \vtheta}   
|g_k^\var(\vxi)|^2 \psdk(\vxi) d\vxi\right), \;\;\;\; j = 0, \ldots, N_d - 1.
\end{equation}
Given the matrix $\ma^\var = (A_{jk}^\var)_{j,k}$, it remains to choose the values for $\vd_j$ and $h_k$. The natural choice is to take $N_d = N_L$ and choose the separations $\vd_k$ and altitudes $h_k$ such that rays drawn from the points $\vxp$ and $\vxp + \ap \vd_k$ in directions $\vpsi$ and $\vpsi'$, respectively, intersect at the altitude $h_k \geq 0$, as in Fig.~\ref{fig:slodar}; this is the setup which was used by \cite{Gilles10}. Observe that an immediate consequence of this setup is that $\vtheta = \vpsi - \vpsi'$ and $\ap \vd_k$ must be parallel, and the altitude $h_k$ may be computed as
\[ h_k = \frac{\abs{\vd_k} \ap}{\abs{\vd_k} \ap / H + \abs{\vtheta}}. \]

A further restriction is needed to apply this method in practice. Namely, there should exist a subaperture with midpoint $\vxp$ in the WFS corresponding to the LGS in direction $\vpsi$, such that $\vxp + \ap\vd_k$ is the midpoint of a subaperture in the other WFS; this is required to be able to measure the correlation $\cor_{m_\var,m_\var'}(D \vd_k)$. Given $\vd_k$, we denote by $N_k$ the number of subaperture pairs which satisfy this condition, and call such pairs \emph{valid}. 

It should be noted that since $m_\var$ measures wavefront curvature and is defined as a weighted average over SH measurements of neighboring subapertures, there will be no measurements available for subapertures at the edge of the WFS. In addition, observe that the requirement for valid pairs implicitly restricts the directions $\vpsi$ and $\vpsi'$ of the LGSs and the location of the WFS subapertures.

In what follows we make the crucial assumption that the correlation 
in equation \eqref{eq:cor_aux} can be determined to a good approximation by taking empirical cross-correlations from a time series of WFS measurements. In the most simplistic setup, one performs several observations during a time interval such that time series averages reach ergodic limits. The length of the required time interval depends on several parameters including wind speed and the WFS setup. More discussion of this approximation is given at the end of section \ref{sec:simdesc}.

By abbreviating $b_j^\var = \cor_{m_\var,m_\var'}(\ap\vd_j)$ and $\vb^\var = (b_j^\var)_{j=1}^{N_d}$ we arrive at
\[ \vb^\var = \ma^\var \vrho, \]
where $\ma^\var$ is an $(N_d \times N_L)$--matrix, and $\vb^\var$ and $\vrho$ are vectors of lengths $N_d$ and $N_L$, respectively.
Further, since $\var \in \{x, y\}$, we obtain two matrix equations for $\vrho$, and combining them yields
\[ \vb := \begin{pmatrix} \vb^x \\ \vb^y \end{pmatrix} = \begin{pmatrix} \ma^x \\ \ma^y \end{pmatrix} \vrho =: \ma \vrho. \]
Since $\ma$ is now a $2 N_d \times N_L$--matrix, this problem is overdetermined when $N_d > N_L / 2$;
in particular, this is the case when $N_d = N_L$. In \cite{Gilles10}, the matrix equation is 
solved in the least-squares sense by defining the solution $\widetilde{\vrho}= \ma^\dagger \vb,$
where $\ma^\dagger$  is the Moore--Penrose pseudoinverse of $\ma$. 
The drawback of this approach is that it does not enforce the physical property
that $\vrho$ should be non-negative. In \cite{Gilles10}, Gilles and Ellerbroek solved this
problem by setting negative entries of $\widetilde{\vrho}$ to zero and removing the 
corresponding columns of $\ma$. The solution process was then repeated until a non-negative solution was obtained.

Another option for solving this problem is to include the non-negativity constraint in the 
minimization problem directly by setting
\begin{equation}\label{eq:standardslodar} \widetilde{\vrho} = \arg\min_{\vrho \geq 0} \norm{\ma \vrho - \vb}_2^2. \end{equation}
In this case the solution is not directly available via the pseudoinverse, but the problem
is still easy to solve using, for example, the \texttt{quadprog}-function in MATLAB. The method based on solving \eqref{eq:standardslodar} will from now on be referred to as the standard SLODAR method, to distinguish it from the methods which will be introduced later in Section~\ref{sec:numerical_methods}.

\subsection{Identification of the ground layer statistics}
\label{sec:inverse_problem}

As we pointed out in section \ref{sec:turbulence},
having a fixed power spectral density for each layer can be unreliable.
Hence, to improve the modeling we consider the power spectral density of the lowest turbulence layer to be unknown, and denote it by $\psdg(\vxi)$ to distinguish it from the known power spectral density at other altitudes.
Consequently, the revisited discrete problem can be written as
\begin{equation}\label{eq:revisited}
b_j^\var = (K_0^\var \psdg)(\vd_j) + \sum_{k=1}^{N_L-1} A_{jk}^\var \rho_k, \;\;\;\; j = 0, \ldots, N_d - 1,
\end{equation}
where $A_{jk}^\var$ is given by \eqref{eq:submatrix_A} and $K_0^\var$ is a certain integral operator detailed below.
Notice that $A_{j0}^\var$ can no longer be computed explicitly since $\psdg(\vxi)$ is an unknown. 
Our problem is then to
\begin{equation}
	\label{eq:problem_statement}
	\textrm{reconstruct } \psdg \textrm{ and } (\rho_k)_{k=1}^{N_L-1} \textrm{ given the vector } \vb.
\end{equation}

Let us next focus on the analytical  properties of the forward operator given in \eqref{eq:cor_cont}.
In particular, for both analytical and numerical purposes, it is important to understand the spaces
between which the integral operator maps. 
Moreover,  we show that the Fourier transform with kernel $\abs{g_k^\var(\vxi)}^2$ gives rise to a compact 
 integral operator, implying in particular that the problem is ill-posed. We will also provide an 
 estimate on the decay of the integral, to justify approximating the infinite integral in \eqref{eq:cor_cont} by an 
 integral over a finite domain in the numerical implementation of the algorithms; cf.~Theorem~\ref{thm:compact}.

 It is reasonable  to assume that $\psdg \in L^2(\R^2)$, and as we deal with radially symmetric 
power spectral densities, we define  the $L^2$-space of such functions as  
\[ L_\text{rad}^2(\R^2)  = \left\{ f(x)\colon \R^2\to \R \;|\; f(\vxi) = \widetilde f(\abs\vxi), 
\int_0^\infty \abs{\widetilde f(r)}^2 r dr < \infty \right\}. \]
Obviously,  this is the space of all radially symmetric $L^2(\R^2)$-functions. 
In the following,  a tilde,  $\widetilde \Phi$, denotes the radial part of a radially symmetric function in $L^2(\R^2)$.

Now, define the integral operator $K_h^\var \colon L_\text{rad}^2(\R^2) \to L^2(\R^2)$ 
\begin{equation}\label{eq:int_op}
(K_h^\var \Phi)(\vd) := \int_{\R^2} e^{-2 \pi i \vxi \cdot \vd} |g_h^\var(\vxi)|^2 \Phi(\vxi) d\vxi.
\end{equation}
This corresponds to  the operator we used earlier in equation \eqref{eq:revisited}, where vectors $\vd_j$ are as in the 
previous 
section;  setting $h=0$ yields the forward operator for the ground layer, $K_0^\var$, as above. 
Nevertheless, for the following results consider a general altitude $h$ in  $K_h^\var$.
We also consider a cut-off version of $K_h^\var$ as follows: 
\[ (K_{h,n}^\var \psd)(\vd) = \int_{B(0,n)} e^{-2 \pi i \vxi \cdot \vd} \abs{g_h^\var(\vxi)}^2 \psd(\vxi) d\vxi. \]

The next theorem states on the one hand  that $K_h^\var$ is a compact operator and hence that the ground-layer 
identification is ill-posed, such that  regularization has to be applied. On the other hand, we also provide  an error
estimate for the cut-off approximation in order to justify the numerical approximation of the integral that we 
use below. 
\begin{theorem}\label{thm:compact}
The integral operator $K_h^\var \colon L_\text{rad}^2(\R^2) \to L^2(\R^2)$ defined in \eqref{eq:int_op} is
compact for all $0 \leq h < H$ and $\var \in \{x, y\}$.

Let $\psd \in L_\text{rad}^2(\R^2)$ be non-negative. Then for all $0 \leq h < H$ and $\var \in \{x,y\}$,
we have
\[ \norm{(K_h^\var  - K_{h,R}^\var)\psd}_\infty \leq 64 D^2 C(h) \int_R^\infty \psdr(r) dr, \]
for all $R \geq 1$, where $\psdr(\abs\vxi) = \psd(\vxi)$.
\end{theorem}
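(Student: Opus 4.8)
The plan is to derive both assertions from one sharp pointwise bound on the kernel $\abs{g_h^\var(\vxi)}^2$. The key structural observation is that $g_h^\var$ contains the algebraically growing factor $t_\var$ (from the wavefront curvature) multiplied by a $\sinc$ in the same variable, and these partly cancel. Concretely, with the normalization $\sinc(x)=\sin(\pi x)/(\pi x)$ and $\ap=D$, the elementary identity $\,t_\var^2\sinc^2(\eta(h)Dt_\var)=\sin^2(\pi\eta(h)Dt_\var)/(\pi^2\eta(h)^2D^2)\,$ collapses the kernel (say for $\var=x$, and symmetrically for $\var=y$) to
\[ \abs{g_h^x(\vxi)}^2 = 64D^2\,\sin^6(\pi\eta(h)D\xi_1)\,\sinc^2(\eta(h)D\xi_2)\ \le\ 64D^2\,\sinc^2(\eta(h)D\xi_2). \]
Thus $\abs{g_h^\var}^2$ is bounded by $64D^2$ but decays quadratically in the coordinate transverse to the $\var$-axis. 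Passing to polar coordinates $\vxi=(r\cos\phi,r\sin\phi)$ and using $\sinc^2(x)\le\min(1,(\pi x)^{-2})$ together with $\abs{\sin\phi}\ge 2\abs{\phi}/\pi$ near the axes, the angular average then gains a factor $1/r$:
\[ \int_0^{2\pi}\abs{g_h^\var(r\cos\phi,r\sin\phi)}^2\,d\phi\ \le\ \frac{64D^2\,C(h)}{r},\qquad r\ge 1, \]
with $C(h)$ essentially a multiple of $1/\eta(h)$; the identical computation with $\abs{g_h^\var}^4$ gives the analogous $1/r$ bound for $\int_0^{2\pi}\abs{g_h^\var}^4\,d\phi$.

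Granting these estimates, the cut-off error bound is immediate. Since $\abs{g_h^\var}^2\ge 0$ and $\psd\ge 0$, the integrand of $(K_h^\var-K_{h,R}^\var)\psd$ has modulus $\abs{g_h^\var}^2\psd$, so in polar coordinates, with $\psd(\vxi)=\psdr(r)$,
\[ \abs{(K_h^\var-K_{h,R}^\var)\psd(\vd)}\ \le\ \int_R^\infty\psdr(r)\,r\Bigl(\int_0^{2\pi}\abs{g_h^\var(r\cos\phi,r\sin\phi)}^2\,d\phi\Bigr)dr\ \le\ 64D^2C(h)\int_R^\infty\psdr(r)\,dr, \]
uniformly in $\vd$. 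It is precisely the cancellation of the area-element factor $r$ against the $1/r$ from the angular integral that produces the clean $\int_R^\infty\psdr(r)\,dr$ (and thereby forces the implicit hypothesis $\psdr\in L^1([R,\infty))$ for the right-hand side to be finite).

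For the compactness claim I would exhibit $K_h^\var$ as an operator-norm limit of the truncations $K_{h,R}^\var$. Each $K_{h,R}^\var$ is compact by a standard argument: its kernel $e^{-2\pi i\vxi\cdot\vd}\abs{g_h^\var(\vxi)}^2$ is bounded and, in the frequency variable, supported on $B(0,R)$, so after factoring through the bounded restriction $L_\text{rad}^2(\R^2)\to L^2(B(0,R))$ it is Hilbert--Schmidt (equivalently, $K_{h,R}^\var\psd$ ranges in a bounded subset of $H^1$, so Rellich applies). That $K_{h,R}^\var\to K_h^\var$ in operator norm follows from the $\abs{g_h^\var}^4$ version of the angular estimate via Plancherel:
\[ \norm{(K_h^\var-K_{h,R}^\var)\psd}_{L^2}^2=\int_{\abs{\vxi}>R}\abs{g_h^\var(\vxi)}^4\abs{\psd(\vxi)}^2\,d\vxi\ \le\ C(h)\int_R^\infty\abs{\psdr(r)}^2\,dr\ \le\ \frac{C(h)}{R}\,\norm{\psd}_{L_\text{rad}^2}^2, \]
which tends to $0$ (with a constant depending also on $D$). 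A norm-limit of compact operators being compact, $K_h^\var$ is compact, and hence the reconstruction problem \eqref{eq:problem_statement} is genuinely ill-posed.

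The main obstacle, I expect, is the first step. The kernel $\abs{g_h^\var}^2$ is bounded but is neither integrable nor square-integrable on $\R^2$ — it does not decay at all along the $\var$-axis — so there is no soft reason for compactness, nor for the cut-off integral to be small; one genuinely has to uncover the cancellation $t_\var^2\sinc^2(\eta(h)Dt_\var)=O(1)$ and then the $1/r$ gain from the angular average, which is the single mechanism behind both assertions. Secondary care will be needed with the $\sinc$ normalization, with the small-$r$ versus $r\ge 1$ regime (absorbed into $C(h)$), and with the bookkeeping of $\int_0^{2\pi}\min(1,(\pi\eta(h)Dr\sin\phi)^{-2})\,d\phi$, but these are routine once the structure has been identified.
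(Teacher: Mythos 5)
Your route is essentially the paper's. The cancellation $t_\var\,\sinc(\eta(h)Dt_\var)=\sin(\pi\eta(h)Dt_\var)/(\pi\eta(h)D)$, which collapses $\abs{g_h^x}^2$ to $64D^2\sin^6(\pi\eta(h)D\xi_1)\,\sinc^2(\eta(h)D\xi_2)$, and the resulting $O(1/r)$ bound on the angular average are exactly the content of the paper's technical Lemma~\ref{lemma:g} in the appendix (stated there as $\int_{\partial B(0,r)}\abs{g_h^\var}^2\,d\sigma\le C(h)$ for $r>1$, which is the same statement with the arc-length factor $r$ absorbed). Your direct polar-coordinate derivation of the sup-norm cut-off bound is, if anything, cleaner than the paper's write-up, which only displays the $L^2$ (Plancherel) chain and asserts the $L^\infty$ estimate from it; and your operator-norm convergence $\norm{K_h^\var-K_{h,R}^\var}^2\lesssim C(h)/R$ via $\abs{g_h^\var}^4\le\norm{g_h^\var}_\infty^2\abs{g_h^\var}^2$ and $\int_R^\infty\abs{\psdr(r)}^2\,dr\le R^{-1}\int_R^\infty\abs{\psdr(r)}^2\,r\,dr$ is verbatim the paper's argument.

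The one step where your justification does not go through as written is the compactness of the truncations $K_{h,R}^\var$. Neither of your two arguments survives the fact that the output variable $\vd$ ranges over all of $\R^2$: the kernel $e^{-2\pi i\vxi\cdot\vd}\abs{g_h^\var(\vxi)}^2\chi_{B(0,R)}(\vxi)$ has modulus independent of $\vd$ and hence is not square-integrable on $\R^2\times B(0,R)$, so the operator is not Hilbert--Schmidt; and Rellich--Kondrachov fails on the unbounded domain, since $H^1(\R^2)$ does not embed compactly into $L^2(\R^2)$. Indeed, by Plancherel $K_{h,R}^\var$ is unitarily equivalent to multiplication by $\abs{g_h^\var}^2\chi_{B(0,R)}$ on $L^2$, which is not a compact operator; testing $K_h^\var$ on normalized radial indicators of annuli $\{r_0<\abs{\vxi}<r_0+1/n\}$ (which tend weakly to zero while $\norm{K_h^\var\psd_n}_{L^2}^2\to\tfrac{1}{2\pi r_0}\int_{\partial B(0,r_0)}\abs{g_h^\var}^4\,d\sigma$) shows the difficulty is genuine and not an artifact of the method. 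To be fair, the paper's own proof is no stronger at this point --- it invokes ``classical results'' for continuous kernels, which require a bounded domain --- so this is a gap you share with the source rather than one you introduced. Everything you wrote becomes correct (and your Hilbert--Schmidt argument valid) once the separations $\vd$ are restricted to a bounded set, which is the physically relevant situation; absent that restriction, the compactness half of the claim needs either a reformulated target space or an additional decay mechanism in $\vd$ that neither proof supplies.
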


\begin{proof}
Since the kernel of the integral operator is continous, if follows from classical results that 
$K_{h,n}^\var$ is a compact operator from  $L_\text{rad}^2(\R^2) \to L^2(\R^2)$.  
Fix $\psd$ such that $\norm{\psd}_{L_\text{rad}^2(\R^2)} = 1$.
Since the 
 Fourier transform is unitary on $L^2$ and by 
Lemma \ref{lemma:g} (see Appendix~\ref{sec:appendix}),  we find 
\begin{align*}
\norm{(K_h^\var  - K_{h,n}^\var)\psd}^2 \leq 
&
\int_{\R^2 \setminus B(0,n)} \abs{g_h^\var(\vxi)}^4 \psd(\vxi)^2 d\vxi \leq 
 \norm{g_h^\var}_\infty^2 \int_{\R^2 \setminus B(0,n)}   \abs{g_h^\var(\vxi)}^2 \abs{\psd(\vxi)}^2 d\vxi \\
& \leq   \norm{g_h^\var}_\infty^2 C(h)  \int_n^\infty \abs{\psdr(r)}^2 dr
\leq   \frac{1}{n} \norm{g_h^\var}_\infty^2 C(h)   \norm{\psdr}_{L_\text{rad}^2(\R^2)}^2,
\end{align*}
where we have used that $\frac{r}{n} \geq 1$. 
It can be elementarily verified that 
$\norm{g_h^\var}_\infty \leq 8 D$, and thus the estimate in the theorem is valid. Moreoever, 
taking the limit $n\to \infty$ yields compactness of $K_h^\var$ as the uniform limit of 
a compact operator is again compact \cite{Kress14}.
\end{proof}

We remark  that the generalized von K\'{a}rm\'{a}n model $\psdk$ in
equation \eqref{eq:gen_vonkarmanphase} satisfies the assumptions in Theorem~\ref{thm:compact}
when $b(\gamma), L_0 > 0$ and 
$\gamma > \tfrac 12$. In this case we obtain the estimate
\[ \norm{(K_h^\var   - K_{h,R}^\var )\psdk}_\infty \leq\frac{C(h) b(\gamma)}{1 - 2 \gamma} R^{1-2 \gamma}. \]

The compactness of  $K_h^\var$ in Theorem~\ref{thm:compact} implies the ill-posedness of the ground layer identification problem, 
which is, of course, not unexpected. In the next section we propose various regularization penalties 
or priors that stabilize the problem and that are adapted to the most important case of power-law spectral 
densities.

\section{Three methods of regularization and their implementation}
\label{sec:numerical_methods}
Let us now discuss different approaches to effectively regularize the problem. The first method can be seen as regularizing the problem by strict parametrization. The other two methods, however,
are non-parametric as we allow for general power spectral densities. 
\subsection{Discretizing the PSD}
We begin by assuming that $\psdg$ is non-negative and radially symmetric, and express it as a linear combination of $N_R$ different radial basis functions centered at the origin. Recall that $\rho_0$ is embedded to the definition of $\psdg$ and write
\begin{equation*}
\psdg(\vxi) = \sum_{l=0}^{N_R - 1} \cpsd_l f_l(\abs \vxi),
\end{equation*}
where $\cpsd_l$ are non-negative coefficients and $f_l(r): [0,\infty) \to [0, \infty)$ are non-negative radial basis functions satisfying
\[ f_l(r_j) = \delta_{jl} = \begin{cases} 1, & j = l, \\ 0, & \text{otherwise} \end{cases} \]
for given discretization points $0 \leq r_0 < r_1 < \cdots < r_{N_R - 1}$. 
Below we used trigonometric basis functions, to enforce some additional smoothness on the solutions. These functions were chosen so that $f_l$ is supported in $[r_{l-1}, r_{l+1}]$ and the functions $f_l$ form a partition of unity in the interval $[0, r_{N_R-1}]$.

With these definitions, we can approximate
\begin{equation*}
(K_0^\var \Phi)(\vd_j) = \frac{1}{\ap^4}
\sum_{l=0}^{N_R - 1} \left(\int_{\R^2} 
e^{-2 \pi i \vxi \cdot \eta_0 \ap \vd_j} e^{2 \pi i \vxi \cdot h_0 \vtheta}   
|g_k^\var(\vxi)|^2 f_l(\abs{\vxi}) d\vxi\right) \cpsd_l = \sum_{l=0}^{N_R-1} B_{jl}^\var \cpsd_l, 
\end{equation*}
for $j = 0, \ldots, N_d - 1$, where the matrix element $B_{jl}^\var$ is given by
\[ B_{jl}^\var := \frac{1}{\ap^4} \left(\int_{\R^2} 
e^{-2 \pi i \vxi \cdot \eta_0 \ap \vd_j} e^{2 \pi i \vxi \cdot h_0 \vtheta}   
|g_k^\var(\vxi)|^2 f_l(\abs{\vxi}) d\vxi\right). \]
We then obtain the matrix equation
\[ \vb^\var = \widetilde{\ma}^\var \widetilde{\vrho} + \mb^\var \vpsd, \]
where $\widetilde{\ma}^\var$ denotes the submatrix of $\ma^\var$ with the first column removed, and similarly $\widetilde{\vrho}$ denotes $\vrho$ with the first element removed. Combining the cases $\var = x$ and $\var = y$ finally yields
\begin{equation}\label{eq:final}
\vb = \begin{pmatrix} \vb^x \\ \vb^y\end{pmatrix} = \begin{pmatrix}\widetilde \ma^x \\ \widetilde \ma^y\end{pmatrix} \widetilde \vrho + \begin{pmatrix}\mb^x \\ \mb^y\end{pmatrix} \vpsd =: \ma \vrho + \mb \vpsd, 
\end{equation}
which we wish to solve for $\vrho$ and $\vpsd$. Note that we have abused notation slightly and dropped the tildes from $\ma$ and $\vrho$, since the first column of $\ma$ and the first entry of $\vrho$ are now represented in the term $\mb \vpsd$.

Observe that since $b^x$ and $b^y$ are vectors of length $N_d$, we have a total of $2 N_d$ measurements and $N_L + N_R - 1$ unknowns. Taking $N_R \gg N_d$ leads to a strongly under-determined system, and thus solving this system for $\vrho$ and $\vpsd$ is an ill-posed problem. 

\subsection{Method 1: Parametric Power Law}

Our first method is based on the assumption that the ground layer statistics satisfy a generalized von K\'{a}rm\'{a}n model similar to \eqref{eq:gen_vonkarmanphase}:
\begin{equation*}
	\psdg(\vxi) = c(\gamma) (|\vxi|^2 + L_0^{-2})^{-\gamma}
\end{equation*}
for some constant $c(\gamma)>0$.
Employing a  least-squares approach, we thus consider the minimization problem
\begin{equation}\label{eq:powerfit} \min_{\vrho, c, \gamma \geq 0} \norm{\ma \vrho + \mb \vpsd(c, \gamma) - \vb}_2, \end{equation}
where $\vpsd(c, \gamma)$ gives a vector where the $j$'th element is the value of the power 
law at the discretization point $r_j$, defined as
$$\cpsd_j(c, \gamma) = c (r_j^2 + 1/L_0^2)^{-\gamma}. $$
Note that the nonlinear dependence of $\vpsd(c, \gamma)$ on the parameters $c$ and $\gamma$ makes the whole problem nonlinear. Since $\vrho$ is now a vector of $N_L - 1$ elements, 
the minimization problem \eqref{eq:powerfit} has a total of $N_L + 1$ unknowns, and they are all subject to non-negativity constraints. We solved the problem using \texttt{fmincon}, the Matlab function for constrained nonlinear multivariate minimization, which works quite quickly with so few unknowns. We used the \texttt{'sqp'}-algorithm with default options except for an optimality tolerance of $10^{-8}$ and a limit of 30000 iterations. For all three methods, we normalized the measurement vector $\vb$ to avoid numerical problems.

Note that we do not take the outer scale $L_0$ as an unknown in \eqref{eq:powerfit}; instead, we assume that it has been estimated and take it as prior information. The reason for this choice is that the SLODAR-based measurements do not seem to be sensitive to the value of $L_0$, and so it cannot be stably reconstructed. On the other hand, this also means that our method does not require an accurate estimate of $L_0$.

Another thing to note in \eqref{eq:powerfit} is that we still represent the PSD as a sum of radial basis functions, which does introduce some small amount of modeling error. This is done purely for computational reasons, since calculating the term corresponding to $\mb \vpsd(c, \gamma)$ exactly would require an expensive Fourier transform, which would take the runtime of \texttt{fmincon} from seconds to hours on a desktop computer.

\subsection{Method 2: Tikhonov Regularization with Power-Law Favoring Penalty}
\label{subsec:regularization}

Next, we consider two non-parametric regularization methods. These two-step methods do not assume a specific form of the power spectral 
density but instead stabilize the problem by adding an appropriate penalty term to the least-squares minimization problem. The difference between these last two methods is in the chosen penalty term.

In the first of these non-parametric methods, we aim to establish a (linear) Tikhonov-type regularization of the form 
\begin{equation}\label{eq:regularization}
\min_{\vrho, \vpsd \geq 0}  \left\{\norm{\ma \vrho + \mb \vpsd - b}_2^2 + \beta_1 \norm{\mgamma_1 (\vpsd - \vpsd_0)}_2^2 + \beta_2 \norm{\mgamma_2 (\vpsd - \vpsd_0)}_2^2 \right\}, \end{equation}
where $\mgamma_1$ and $\mgamma_2$ are regularization matrices representing our prior information about $\vpsd$ and the vector $\vpsd_0$ represents our \emph{a priori} estimate of $\vpsd$. The coefficients $\beta_1$ and $\beta_2$ are 
regularization parameters. 

Before we describe how $\mgamma_1$, $\mgamma_2$ and $\vpsd_0$ are chosen, let us first consider radial power spectral densities $\psdgr(\abs\vxi) = \psdg(\vxi)$ and $\psdr(\abs\vxi) = \psd(\vxi)$, where $\psd$ is the von K\'arm\'an PSD given in \eqref{eq:vonkarman}, and $\psdg$ is any radial and non-negative PSD which is continuously differentiable away from zero. Our goal is to design the regularization so that it favors PSDs $\psdg$ which are close to $\psd$ in the $H^1$-sense; in other words, we want both $\psdgr(r) - \psdr(r)$ and $\psdgr'(r) - \psdr'(r)$ to be small. However, as a power law, the magnitude of $\psdr(r)$ changes rapidly with $r$, and so it makes more sense to consider relative errors instead. Therefore, the goal of our regularization scheme is to favor functions $\psdgr(r)$ for which
\begin{equation}\label{eq:relerr}
\frac{\psdgr(r) - \psdr(r)}{\psdr(r)} \text{ and } \frac{\psdgr'(r) - \psdr'(r)}{\psdr'(r)} \text{ are small in $L^2$-norm.}
\end{equation}
It is important to emphasize that while this regularization scheme does strongly favor PSDs close to $\psdr$, it is still a non-parametric regularization that makes no assumptions on the specific form of $\psdgr$, and thus it allows much more freedom than the first two methods we have described in this section.

Note that even though we have assumed that $\psdr$ is the von K\'arm\'an PSD, there is no need to make that specific assumption if better prior information is available, as long as $\psdr$ and $\psdr'$ are nonzero everywhere. This is certainly true for power laws of the form given in \eqref{eq:gen_vonkarmanphase} with any negative exponent, so e.g.~the solution from method 1 may be used as the \emph{a priori} estimate of $\vpsd$; this is precisely what we have done in the numerical work we present in section \ref{sec:simulations}.

Now, as we are working in the discretized setting, we need to choose the regularization matrices $\mgamma_1$ and $\mgamma_2$ and the vector $\vpsd_0$ such that they give a discrete version of the relative error terms given in \eqref{eq:relerr}. To this end, let $\Delta_i = \abs{r_{i+1} - r_i}$ and define $\mk$ as the first-order difference operator given by
\begin{equation*}
K_{ij} = \begin{cases} - \frac{1}{\Delta_i}, & \text{when } j = i, \\
\frac{1}{\Delta_i}, & \text{when } j = i + 1, \\
0, & \text{otherwise,} \end{cases}
\end{equation*}
where $i = 1, \ldots, N_r-1$ and $j = 1, \ldots, N_r$. Thus $\mk$ is matrix with $N_r - 1$ rows and $N_r$ columns, and $\mk \vpsd$ approximates $\psdgr'(r)$ at the discretization points $r_i$.

Finally, to obtain the discretized form of \eqref{eq:relerr}, we define the regularization matrices as
\begin{equation}\label{eq:regmats}
\mgamma_1 = \diag(\vpsd)^{-1} \;\;\text{ and }\;\; \mgamma_2 = \diag(\mk \vpsd)^{-1} \mk,
\end{equation}
where $\diag(\vx)$ is the square diagonal matrix with the vector $\vx$ along its diagonal. Then the discrete analogy of \eqref{eq:relerr} is to favor discretized PSDs $\vpsd$ such that $\mgamma_1 (\vpsd - \vpsd_0)$ and $\mgamma_2 (\vpsd - \vpsd_0)$ are small in $L^2$-norm. This is precisely why the minimization in \eqref{eq:regularization} includes the squared $L^2$-norms of these two terms.

The vector $\vpsd_0$ can be obtained e.g.~by using the standard von K\'arm\'an PSD or by taking the solution given by method 1; in our numerical experiments we chose the latter option. The regularization parameters $\beta_1$ and $\beta_2$ determine the balance between trust in the measurements and in the prior information, and their values should be chosen appropriately. There is a vast amount of literature in the inverse problems field on how such parameters should be chosen. However, these parameter choice rules are beyond the scope of this paper; the values used in our numerical experiments were chosen by hand.

As a final remark, \eqref{eq:regularization} can be written as a quadratic programming problem, since the functional being minimized is quadratic in $\vpsd$. Indeed, if we denote $\mgamma := \beta_1 \mgamma_1^\T \mgamma_1 + \beta_2 \mgamma_2^\T \mgamma_2$ and set
\[ \mh := \begin{pmatrix} \ma^\T & \m0 \\ \m0 & \mb^\T \mb + \mgamma \end{pmatrix}, \;\; \vf := \begin{pmatrix} -\ma^\T \vb \\  - \mb^\T \vb - \mgamma \vpsd_0 \end{pmatrix}, \;\;  \vx := \begin{pmatrix} \vrho \\ \vpsd \end{pmatrix},  \]
then the minimization problem
\[ \min_{\vx \geq 0} \frac 12 \vx^T \mh \vx + \vf^T \vx \]
is equivalent to the minimization problem given in \eqref{eq:regularization}. We solved this problem using Matlab's \texttt{quadprog}, with the \texttt{'interior-point-convex'}-algorithm and default options except for an optimality tolerance of 100 times the floating point accuracy, i.e. approximately \num{2.22e-14}.

\subsection{Method 3: Total Variation Regularization with Power-Law Favoring Penalty}
\label{subsec:TVregularization}

Finally, we present a variant of the previous method where we impose a total variation (TV) prior on the derivative of the PSD and a Tikhonov-type regularization on the PSD itself. The purpose of the TV prior is to promote sparsity in the derivative term, which essentially means that the relative error between $\vpsd$ and the given prior PSD $\vpsd_0$ will be (close to) piecewise constant. From a physical point of view this may seem like an odd choice, but the goal is that this regularization might be better at pinpointing the parts of the PSD that deviate significantly from $\vpsd_0$. In other words, while the reconstruction may not have the visual appearance of a typical PSD, it should on the other hand be visually obvious where the reconstructed PSD differs from the prior assumption.

As before, the regularization is applied to the relative error of $\vpsd$ from a given prior PSD $\vpsd_0$ since the magnitude of power laws changes rapidly. The regularization problem then takes the form
\begin{equation}\label{eq:TVregularization}
\min_{\vrho, \vpsd \geq 0} \left\{\norm{\ma \vrho + \mb \vpsd - b}_2^2 + \beta_1 \norm{\mgamma_1 (\vpsd - \vpsd_0)}_2^2 + \beta_2 \norm{\mgamma_2 (\vpsd - \vpsd_0)}_1  \right\},
\end{equation}
where the only change from \eqref{eq:regularization} is that the second regularization term is changed from an $L^2$-norm to an $L^1$-norm, which is defined as
\begin{equation}\label{eq:L1norm}
\norm{\vx}_1 = \sum_{i=1}^n \abs{x_i}.
\end{equation}

The regularization matrices $\mgamma_1$ and $\mgamma_2$ are given by \eqref{eq:regmats}, as before. The prior estimate $\vpsd_0$ is also obtained the same way, by using the von K\'arm\'an PSD or the result obtained by using method 1. We again choose the latter option for our numerical experiments. The previous statements regarding the regularization parameters $\beta_1$ and $\beta_2$ also hold true for this method.

One important remark to make is that the problem given in \eqref{eq:TVregularization} does not at first glance seem to fit into the quadratic programming framework, since the $L^1$-norm defined in \eqref{eq:L1norm} involves an absolute value. However, the minimization problem can indeed be turned into a quadratic programming problem by clever use of auxiliary variables, as described in \cite{kolehmainen12}. We begin by defining $\mgamma_2(\vpsd-\vpsd_0) = \vu^+ - \vu^-$, where $\vu^+, \vu^- \geq 0$. Then \eqref{eq:TVregularization} may be equivalently written as
\[ \min_{\vrho, \vpsd, \vu^+, \vu^- \geq 0} \left\{\norm{\ma \vrho + \mb \vpsd - b}_2^2 + \beta_1 \norm{\mgamma_1 (\vpsd - \vpsd_0)}_2^2 + \beta_2 \v1^\T \vu^+ + \beta_2 \v1^\T \vu^- \right\}, \]
where $\v1$ is a vector of ones. Now, denote
\[ \vz = \begin{pmatrix} \vrho \\ \vpsd \\ \vu^+ \\ \vu^- \end{pmatrix}, \;\;\; \mq = \begin{pmatrix} \ma^\T \ma & \m0 & \m0 & \m0 \\ \m0 & \mb^\T \mb + \mgamma_1^\T \mgamma_1 & \m0 & \m0 \\ \m0 & \m0 & \m0 & \m0 \\ \m0 & \m0 & \m0 & \m0 \end{pmatrix}, \;\;\; \vc = \begin{pmatrix} -\ma^\T \vb \\ -\mb^\T \vb -\beta_1 \mgamma_1^\T \mgamma_1 \vpsd_0 \\ \beta_2 \v1 \\ \beta_2 \v1 \end{pmatrix}. \]
Then \eqref{eq:TVregularization} is equivalent to the quadratic programming problem
\begin{align*}
\text{minimize } &  \frac 12 \vz^\T \mq \vz + \vc^\T \vz \\
\text{subject to } & \mm \vz = \vm \text{ and } \vz \geq 0,
\end{align*}
where $\mm$ and $\vm$ encode the equality $\mgamma_2(\vpsd-\vpsd_0) = \vu^+ - \vu^-$:
\[ \mm := \begin{pmatrix} \m0 & \mgamma_1 & - \mi & \mi \end{pmatrix}, \;\;\; \vm = \mgamma_1 \vpsd_0. \]
As with method 2, we solved this problem using the \texttt{'interior-point-convex'}-algorithm in \texttt{quadprog} with an optimality tolerance of approximately \num{2.22e-14}.

\section{Numerical simulations}
\label{sec:simulations}

In this section we describe the numerical simulations used to test our methods, and present the results of these tests.

\subsection{Computing the system matrices}

The system matrices $\ma$ and $\mb$ of \eqref{eq:final} are expressed as Fourier transforms over $\R^2$. We estimate the integrals over $\R^2$ by integrals over the square domain $[-10,10]^2$, motivated by the decay rate shown in Theorem~\ref{thm:compact}. These finite Fourier integrals were evaluated using a method based on the Fast Fourier Transform (FFT). This method was first introduced by Bailey and Swarztrauber in \cite{Bailey94} and later generalized to the multidimensional case by Inverarity in \cite{Inverarity02}. Notice carefully that the number of integration points needs to be large enough to avoid any problems with the FFT related to aliasing \cite{Cooley67}.

\subsection{Simulating the atmosphere}
\label{sec:simdesc}
We studied the performance of our regularization methods by using MOST, a MATLAB tool for simulating adaptive optics systems which has been developed by the AAO team at JKU Linz in Austria. The atmosphere was simulated as discrete layers of turbulent air, using the standard von K\'arm\'an power spectral density given by \eqref{eq:vonkarman}, which for a layered atmosphere is given by
\[ \psdn(\vxi) = b \rho_k (\abs{\vxi}^2 + 1/L_0^2)^{-11/6}. \]
The outer scale above was set to $L_0 = \SI{25}{m}$. For the ground layer turbulence, we replaced the above PSD by the general version given by \eqref{eq:gen_vonkarman}, which in this case yields
\begin{equation}\label{eq:sim_psd}
\psdn(\vxi,\gamma) = b(\gamma) \rho_0 (\abs{\vxi}^2 + 1/L_0^2)^{-\gamma}.
\end{equation}
We chose the value $\gamma = 1.5732$ for the exponent. This was a matter of convenience: following the work of \cite{toselli2008} and \cite{li2015}, we found that $\gamma \approx 1.5732$ is the unique exponent which gives the same coefficient $b(\gamma)$ as we have in the standard von K\'arm\'an model with the exponent $\gamma_0 = 11/6 \approx 1.8333$.

In addition to the above, we also wanted to see how well we can recognize deviations from the von K\'arm\'an power law model. For this reason, we define the smooth bump function
\[ \Psi_{r_0}(r) = \begin{cases} \frac 12 \sin^2\left(\frac{(r-r_0+0.05)\pi}{0.1}\right), & \abs{r - r_0} < 0.05 \\ 0, & \text{otherwise.} \end{cases} \]
This function is supported in the interval $[r_0 - 0.05, r_0 + 0.05]$, attains the value $0.5$ at $r_0$ and is smooth everywhere. In our simulations we used a PSD with the same exponent $\gamma = 1.5732$ as above, but we added three bumps, centered at frequencies $0.4, 0.55, 0.7$; the full PSD was thus given by
\begin{equation}\label{eq:psd_bumps}
\psdn^\text{bumps}(\vxi) = b(\gamma) \rho_0 (\abs{\vxi}^2 + 1/L_0^2)^{-\gamma} (1 + \Psi_{0.4}(r) - \Psi_{0.55}(r) + \Psi_{0.7}(r)).
\end{equation}
This simulated PSD is shown in Fig.~\ref{fig:psd_bumps}, along with the reconstruction.

In the simulations we considered a telescope with a diameter of 42 meters and wavefront sensors with $84 \times 84$ subapertures; the subaperture size was thus $\ap = 0.5$ meters. The two LGSs were at directions $\vpsi = (3.75, 0)$ and $\vpsi' = (-3.75, 0)$, so the guide star separation was $\vtheta = (7.5, 0)$. These angles are given in arcminutes, and the coordinate system is fixed such that the rows and columns of subapertures in the WFSs are parallel to the $x$- and $y$-axes. The LGS separation $\vtheta$ is then parallel to the subaperture rows, so we choose the subaperture separations $\vd_k = (k,0)$, since these need to be in the same direction as $\theta$.

An important point to keep in mind is that for larger $k$, there are fewer pairs of subapertures separated by a distance $\ap \vd_k$, which in turn leads to a noisier estimate of the cross-correlation for $\vd_k$. We therefore restrict ourselves to the range $k = 0, \ldots, 60$ to ensure a sufficient quality of the measurements. Since we are using curvature measurements $m_x$ and $m_y$, which are weighted sums of adjacent subapertures, we have a grid of $82 \times 84$ measurements for $m_x$ and $84 \times 82$ measurements for $m_y$. This means that even for the separation $\vd_{60}$ we have $22 \times 84$ and $24 \times 82$ measurements of the cross-correlation in $m_x$ and $m_y$, respectively. Thus the cross-correlation at each time step for each $\vd_k$ is estimated as an average over at least 1600 measurements.

With $\vd_k$ chosen as above, the altitude for layer $k$ is obtained by drawing lines from two subapertures separated by a distance $\ap \vd_k$ to the corresponding LGSs, and choosing the altitude where these lines intersect, as in Fig.~\ref{fig:slodar}. These altitudes are given explicitly by the formula
\[ h_k = \frac{k \ap}{k \ap / H + \abs{\vtheta}}, \]
where the LGS altitude was $H = \SI{90}{km}$. This yields $h_0 = \SI{0}{m}$, $h_1 \approx \SI{229}{m}$ and $h_{60} \approx \SI{12}{km}$, with the remaining altitudes more or less evenly spaced.

The atmosphere was modelled as 61 discrete layers of turbulent air, located at the altitudes $h_k$ defined above. Thus the simulated and reconstructed layers are located at the same altitudes; this was done to ensure that the reconstructed and simulated turbulence profile can be easily compared. The simulated turbulence profile can be seen in Fig.~\ref{fig:cn2}. To avoid an inverse crime \cite{kaipio06}, we also consider a model with 9 layers which for the most part do not coincide with SLODAR layer altitudes.

We simulated 50 timesteps, regenerating the atmosphere each time to ensure that the measurements are statistically independent. This corresponds roughly to waiting until the atmosphere has moved over the telescope; more specifically, it ties into the rate of decorrelation of the atmospheric turbulence. In a study carried out by Guesalaga et al.~\cite{Guesalaga2014}, the authors found typical rates of decorrelation of \SIrange{1.0}{3.0}{\per\second}, although in the presence of heavy dome seeing rates as low as \SI{0.3}{\per\second} were obtained for the ground layer. The authors found a clear, seemingly linear correlation between wind speed and the rate of decorrelation. Thus wind speed and dome seeing appear to be the two key factors.

From the above, it seems reasonable to assume that a single simulated timestep takes roughly a second in the real world, or up to four seconds in the presence of dome seeing. This leads to an estimate of between one and four minutes for the time required to obtain 50 uncorrelated samples of the atmosphere, which is reasonable for turbulence profiling. Finally, since we have 50 timesteps and at least 1600 valid subaperture pairs for each $\vd_k$, this means we estimate the cross-correlations by averaging over more than 80 000 measurements.

During our simulations we also found that sufficient resolution for the turbulent layers is absolutely essential for accurate correlation measurements. As a trade-off between computation time and accuracy, we used matrices of size $8400 \times 8400$ for each layer; this produced an error of order \SI{1}{\percent} in the measurement vector $\vb$. We also found that there appears to be a limit on the frequency where data is available, since deviations such as the ones used in the PSD with bumps appear to be invisible to our methods if the bumps are located at frequencies greater than \SI{1}{\per\meter}. We suspect that this is due to the effect of aliasing, since the largest spatial frequency visible to Shack--Hartmann wavefront sensors should be of the order of $1/(2 D)$ due to the Nyquist sampling theorem. As the subaperture size is \SI{0.5}{\meter} in our simulations, this corresponds exactly to the frequency \SI{1}{\per\meter}.

\subsection{Computational costs}

The computational costs for the methods described in Section~\ref{sec:numerical_methods} fall into two separate categories. We have the offline cost of computing the system matrices $\ma$ and $\mb$, which can be quite heavy but can be done beforehand, and the online costs of determining the measurement vector $\vb$ and using the numerical solvers, which would in practice need to be done during telescope operation, although not on the same millisecond timescale as atmospheric tomography is required. All computation times cited here are for a laptop with an Intel Core i7-5600U processor, which has two cores running at \SI{2.6}{\giga\hertz} and two threads per core.

For the offline costs, computing a single column of the matrix $\ma$ or $\mb$ takes an average of 22 seconds. In our examples $\ma$ has 60 columns and $\mb$ has 401, so the total offline computation cost is just under three hours. The main component of this cost is that each column requires evaluating two Fourier transforms, which we solve using FFTs with the method given in \cite{Inverarity02}. 

The online cost consists of computing the correlations for the measurement vector $\vb$, and using the numerical solvers. The correlations for 50 samples of WFS measurements take 0.3 seconds to compute with our Matlab implementation. However, as we noted earlier, in the real world 50 uncorrelated samples of the atmosphere might correspond to between one and four minutes of measurements. At a sampling frequency of \SI{500}{\hertz} this yields between $30 000$ and $120 000$ samples, which would take 3 to 12 minutes on our laptop. 

Once the matrices $\ma$ and $\mb$ and the vector $\vb$ are available, finding the solution using method 1 or 2 takes \SI{1.8}{\second} on average, and method 3 requires \SI{3.5}{\second}; the times for methods 2 and 3 include the cost of running method 1 as a prior estimate, so we can see that the additional cost of solving the quadratic programming problem for method 2 is negligible. These times were obtained for the data where the simulated PSD had three bumps in it. These solutions are presented later in Figures \ref{fig:psd_bumps} and \ref{fig:psd_bumps_tv}.

It should be emphasised that both computing the matrices and the correlations can be trivially parallelized, and there is a lot of room for other optimizations as well since we did not put a lot of effort into speeding up the code. For example, the method given in \cite{Inverarity02} is not really necessary for this particular application, since it is far more flexible than we require, and this comes with an added computational cost. Specifically, the method uses two FFTs of $2N \times 2N$ matrices to reach the same accuracy which could in this application be reached with a single FFT of an $N \times N$ matrix, which would reduce the computation time by a factor of eight and the memory footprint by a factor of four.

\subsection{Numerical results for the 61-layer model}

We will use three error metrics for comparing the results: relative $L^2$-error for the turbulence profiles, average $L^2$-error of the logarithms for PSDs, and the residuals. The first two are defined by
\begin{equation}\label{eq:errors}
E_{C_n^2}(\vrho) = \frac{\norm{\vrho - \vrho_\text{sim}}_2}{\norm{\vrho_\text{sim}}_2}, \;\; E_{\text{PSD}}(\vpsi) = \frac{\norm{\log_{10}(\psdg(r)) - \log_{10}(\psd_\text{sim}(r))}_2}{r_{N_R-1}},
\end{equation}	
where $\psdg(r)$ is the function obtained from the discrete PSD $\vpsi$ as a sum of basis functions, $r_{N_R-1} = 10$ is the largest discretization point for the PSD, and the $L^2$-norm is computed through numerical integration. The reason for this is that the discretization points for the PSD have been chosen non-uniformly, with over a third of them located between 0 and 1.

The residual for the three methods we have presented in Section~\ref{sec:numerical_methods} is given by
\begin{equation}\label{eq:residual}
 E_{\text{res}}(\vrho, \vpsi) = \norm{\ma\vrho + \mb\vpsi - \vb}_2,
\end{equation}
and for the standard SLODAR method it is given by
\begin{equation}\label{eq:residual_slodar}
E_{\text{res}}(\vrho) = \norm{\ma\vrho - \vb}_2.
\end{equation}
Note that the turbulence profile errors $E_{C_n^2}(\vrho)$ will never include the ground layer, because there is no obvious way of determining the turbulence strength at the ground when the PSD differs from the von K\'arm\'an model used for all other layers. The ground layer is of course also omitted for results obtained with standard SLODAR to ensure that the presented error values are comparable. 

Let us first consider the simulated PSD without bumps. Fig.~\ref{fig:psd} shows a semilogarithmic plot of the simulated PSD and the reconstruction obtained by using method 1, i.e., the parametric method that fits a power law to the measurements. The von K\'arm\'an PSD given by \eqref{eq:vonkarman} is shown for reference, both to demonstrate the difference between it and the simulated PSD, and to emphasize how close the reconstruction is to the simulated PSD. The reconstructed exponent is \num{-1.5338}, which is a bit off from the simulated one of \num{-1.5732}, but this is quite reasonable given the limited number of samples.

\begin{figure}
				 \includegraphics[width=0.9\textwidth]{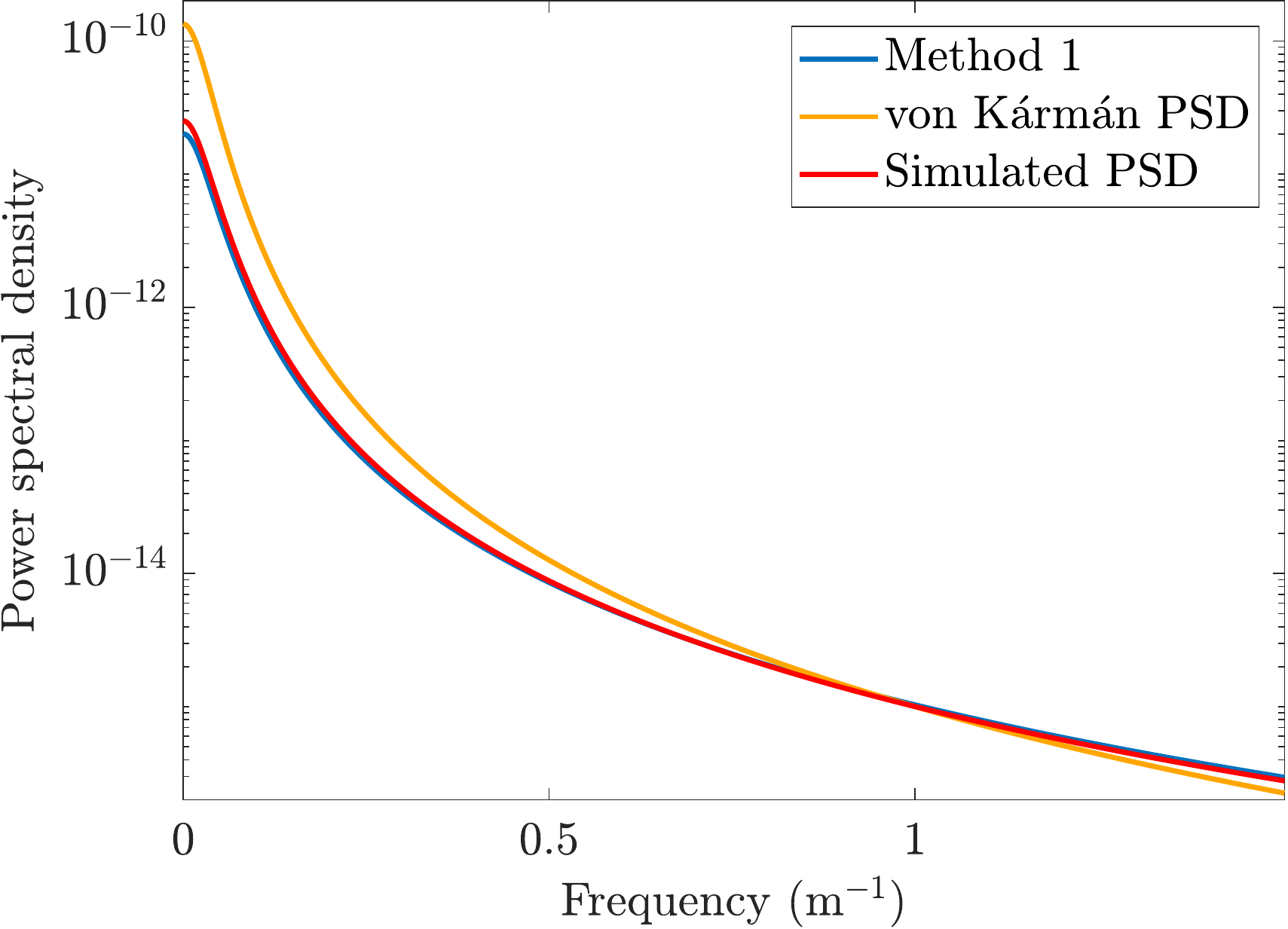}
				 \caption{Reconstruction (blue) of the ground layer PSD without bumps using method 1 (power law fitting), with the von K\'arm\'an PSD (yellow) and the simulated PSD (red) shown for reference. The reconstructed exponent was \num{-1.5338}, which is off by \SI{2.5}{\percent} from the true value of \num{-1.5732}.}
				 \label{fig:psd}
\end{figure}

In Fig.~\ref{fig:cn2}, we show the turbulence profile reconstructed by method 1 for layers 3--61. Our method agrees perfectly with standard SLODAR at the higher altitudes where the ground layer PSD has little or no effect, but at the layers close to the ground we see a much better agreement with the simulated turbulence profile than we would with standard SLODAR. The second layer is not shown in Fig.~\ref{fig:cn2} for graphical reasons, since the simulated turbulence strength is over four times as large as on the third layer. 

\begin{figure}
				 \includegraphics[width=0.9\textwidth]{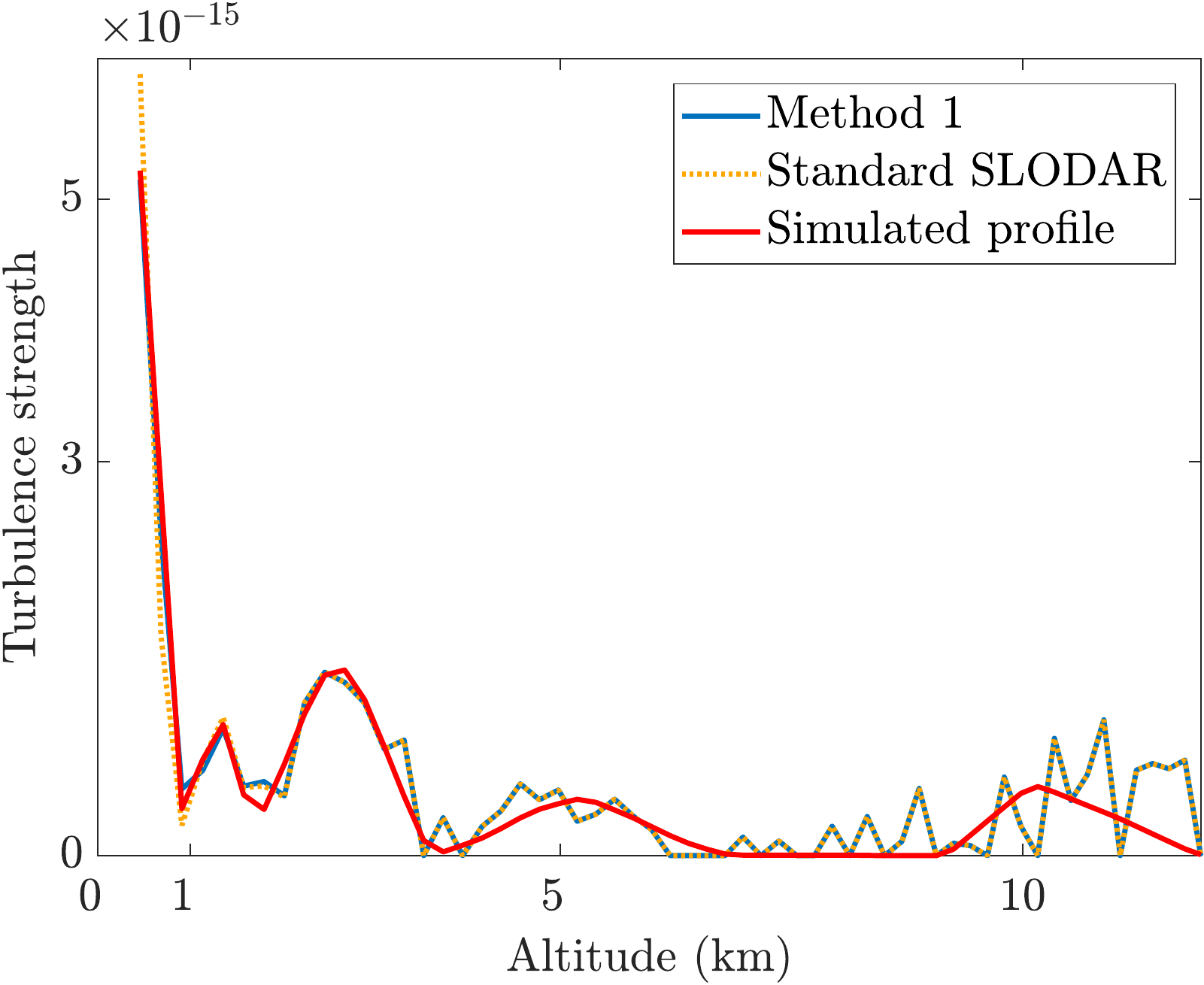}
				 \caption{Turbulence profile (blue) with method 1, with data generated using the simulated PSD without bumps. The true turbulence profile (red) is shown for reference, as well as the profile that would be obtained using standard SLODAR without PSD reconstruction (orange dots). The first two layers are not shown, since our method does not recover just the turbulence strength at the ground layer, and the magnitude of the second layer would dominate the figure. In the second layer, our method reduces relative error by a factor of four.}
\label{fig:cn2}
\end{figure}

The error metrics for all cases we will consider with the 61-layer atmosphere model have been collected into Table~\ref{table:errors}. As the second layer is omitted from turbulence profile images, we have also collected the turbulence profile values and relative errors for layer 2 into the same table. The top part of the table corresponds to Fig.~\ref{fig:cn2}, and as we can see, the relative error in the second layer drops roughly by a factor of four when using method 1, and the relative $L^2$-error for the turbulence profile is reduced by \SI{39}{\percent} compared to standard SLODAR.

\begin{table}[tp]
\caption{Values for the error metrics given in \eqref{eq:errors}, \eqref{eq:residual} and \eqref{eq:residual_slodar} for the 61-layer atmosphere in the three cases we will consider: the PSD without bumps given by \eqref{eq:sim_psd}, and the PSD with bumps given by \eqref{eq:psd_bumps} for both 50 and 500 samples of the atmosphere. Recall that $E_{\text{res}}$ is the $L^2$-residual, $E_{C_n^2}$ is the relative $L^2$-error for turbulence profiles and $E_\text{PSD}$ is the average logarithmic $L^2$-error for PSDs. We also include the turbulence profile values and relative errors in the second layer, as this is omitted in the figures in this section due to graphical reasons.}
\vspace{2mm}
\centering
\begin{tabular}{@{}lccccc@{}} \toprule
  & \multicolumn{3}{c}{Error metrics} &\multicolumn{2}{c}{Second layer of $C_n^2$-profile} \\ 
\cmidrule(r){2-4} \cmidrule(r){5-6} 
Method/PSD & $E_{\text{res}}$ ($10^{-14}$) & $E_{C_n^2}$ (\si{\percent}) & $E_{\text{PSD}}$ & Value ($10^{-14}$) & Rel.~error (\si{\percent})  \\ \midrule
No bumps, 50 samples  &    &    &    &    &    \\
\cmidrule(r){1-1}
Truth  &  5.4923  &  \NA  &  \NA  &  2.2757  &  \NA  \\
SLODAR  &  4.4218  &  13.76  &  \NA  &  2.0428  &  -10.23  \\
Method 1    &  2.6170  &  8.42  &  0.0660  &  2.3333 &  2.53  \\\midrule
Bumps, 50 samples  &    &    &    &    &    \\
\cmidrule(r){1-1}
Truth  &  4.3314  &  \NA  &  \NA  &  2.2757  &  \NA  \\
SLODAR  &  3.8235  &  12.59  &  \NA  &  2.0541  &  -9.74  \\
Method 1    &  2.3615  &  8.07  &  0.0268 &  2.2986 &  1.01  \\
Method 2  &  1.7796  &  6.98  &  0.0193  &  2.2913  &  0.68  \\
Method 3  &  1.8015  &  7.16  &  0.0207  &  2.2933  &  0.77  \\\midrule
Bumps, 500 samples  &    &    &    &    &    \\
\cmidrule(r){1-1}
Truth  &  1.8485  &  \NA  &  \NA  &  2.2757  &  \NA  \\
SLODAR  &  3.7227  &  11.01  &  \NA  &  2.0547  &  -9.71  \\
Method 1    &  1.5628  &  5.61  &  0.0422 &  2.3282 &  2.30  \\
Method 2  &  0.6569  &  2.88  &  0.0358  &  2.3183  &  1.87  \\
Method 3  &  0.6829  &  3.05  &  0.0353  &  2.3182  &  1.86  \\
\bottomrule
\end{tabular}
\label{table:errors}
\end{table}

Next, we consider the PSD with bumps given by \eqref{eq:psd_bumps}. In Fig.~\ref{fig:psd_bumps}, we can see the simulated PSD and the reconstructions obtained using methods 1 and 2; note that the solution from method 1 serves as the prior estimate for method 2. The von K\'arm\'an PSD is not shown in Fig.~\ref{fig:psd_bumps}, but recall that the simulated PSD here uses the same exponent as the simulated PSD in Fig.~\ref{fig:psd}; the only difference between the two is the presence of three bumps in the PSD. 

We can see from Fig.~\ref{fig:psd_bumps} that method 1 picks out the overall trend of the PSD very well; in fact, the reconstructed exponent in this case is \num{-1.5886}, which is better than what we had in the previous example. The reconstruction shown in Fig.~\ref{fig:psd_bumps} was obtained with method 2, using values of $\beta_1 = \num{5e-6}$ and  $\beta_2 = \num{6e-7}$ for the regularization parameters. The method can find all three bumps, although they are somewhat wider than they should be.

\begin{figure}
				 \includegraphics[width=0.9\textwidth]{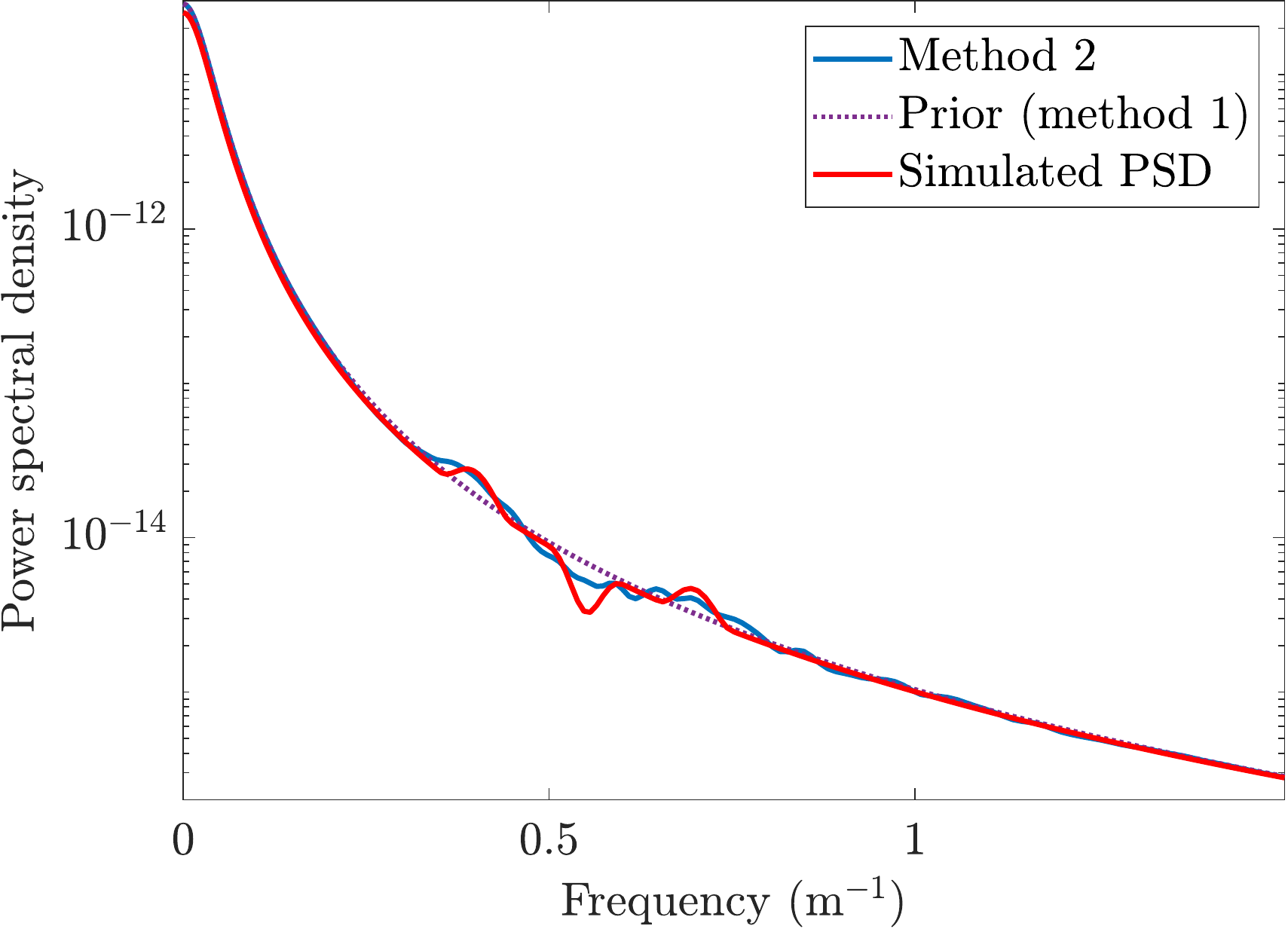}
				 \caption{Reconstruction (blue) of the ground layer PSD with bumps using method 2 (Tikhonov regularization) with $\beta_1 = \num{5e-6}$ and $\beta_2 = \num{6e-7}$. The simulated PSD (red, given by \eqref{eq:psd_bumps}) and the \emph{a priori} estimate using method 1 (purple dots) are shown for reference. The reconstructed exponent using method 1 was \num{-1.5886}, which gives a relative error of just \SI{1}{\percent}. Note that while the von K\'arm\'an PSD is not shown here, the difference between it and the simulated PSD is the same as it was in Fig.~\ref{fig:psd}.}
				 \label{fig:psd_bumps}
\end{figure}

The average logarithmic $L^2$-errors for PSDs can be seen in the middle part of Table~\ref{table:errors}. For comparison, the error we would obtain by comparing the PSD without bumps given by \eqref{eq:sim_psd} to the PSD with bumps given by \eqref{eq:psd_bumps} is \num{0.0234}; the value given by method 2 is almost \SI{18}{\percent} smaller than this.

Fig.~\ref{fig:cn2_bumps} shows the reconstructed turbulence profile using method 2, corresponding to the PSD reconstruction in Fig.~\ref{fig:psd_bumps}. The reconstruction given by our method is not quite as good as it was in Fig.~\ref{fig:cn2}, but we can also see the standard SLODAR method having much more trouble, with two layers close to the ground being reconstructed as zero. As before, the first layer shown in Fig.~\ref{fig:cn2_bumps} is the third layer in the atmosphere since the ground layer turbulence strength cannot be shown and the second layer was removed due to graphical reasons.

\begin{figure}
				 \includegraphics[width=0.9\textwidth]{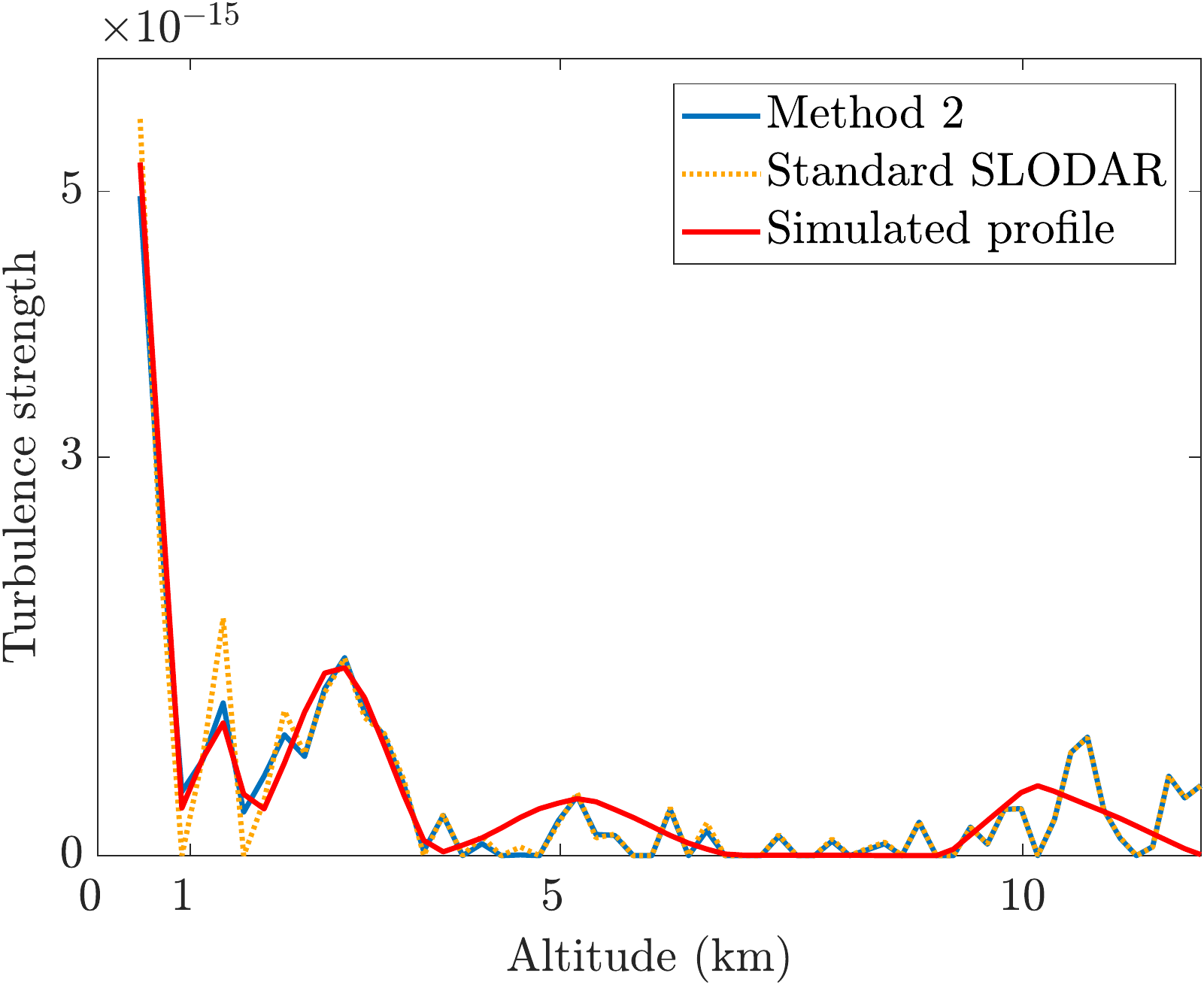}
				 \caption{Reconstruction (blue) of the turbulence profile using method 2 (Tikhonov regularization) on the simulated PSD with bumps; this corresponds to the reconstructed PSD shown in Fig.~\ref{fig:psd_bumps}. The true turbulence profile (red) and the reconstruction using the standard SLODAR method (orange dots) are shown for reference. As in the turbulence profile shown in Fig.~\ref{fig:cn2}, the first layer cannot be shown since the turbulence strength of that layer is coupled with the PSD shown in Fig.~\ref{fig:psd_bumps}, and the second layer is omitted for graphical reasons. This time, however, our reconstruction is much closer to the true profile in the second layer than the standard SLODAR method.}
				\label{fig:cn2_bumps}
\end{figure}

Looking at the middle part of Table~\ref{table:errors}, we see an even bigger improvement in the second layer than we did before. The relative error of SLODAR is still close to \SI{10}{\percent}, while all three of our methods give errors of around \SI{1}{\percent}. This highlights the benefits of recovering the ground layer PSD and not just the turbulence profile. The relative $L^2$-errors for the turbulence profiles show similar results, although here the difference is less pronounced since the higher altitudes are still quite noisy and are unaffected by the ground layer reconstruction. We can also see the residuals dropping steadily as the method complexity increases, and all residuals are smaller than the residual for the ''truth'' in Table~\ref{table:errors}, which was obtained by evaluating \eqref{eq:residual} with the true PSD and turbulence profile. It is also positive to see that the residuals we obtain are not significantly smaller than the true value, as that would be an indication that we are fitting to noise.

Fig.~\ref{fig:psd_bumps_tv} shows the reconstruction obtained for the simulated PSD with bumps using method 3, which uses total variation regularization. This time we have focused the image on the region with bumps, since the reconstruction outside this region is virtually identical to the one shown in Fig.~\ref{fig:psd_bumps} which was obtained with method 2. We can see that the reconstruction looks like a power law with sudden jumps in magnitude -- this is precisely the goal of method 3, since the total variation favors a piecewise constant relative error between the \emph{a priori} estimate and the reconstruction. This enables us to see the location of the bumps very clearly. We omit the turbulence profile for this reconstruction since the result is very close to the one shown in Fig.~\ref{fig:cn2_bumps}.

\begin{figure}
				 \includegraphics[width=0.9\textwidth]{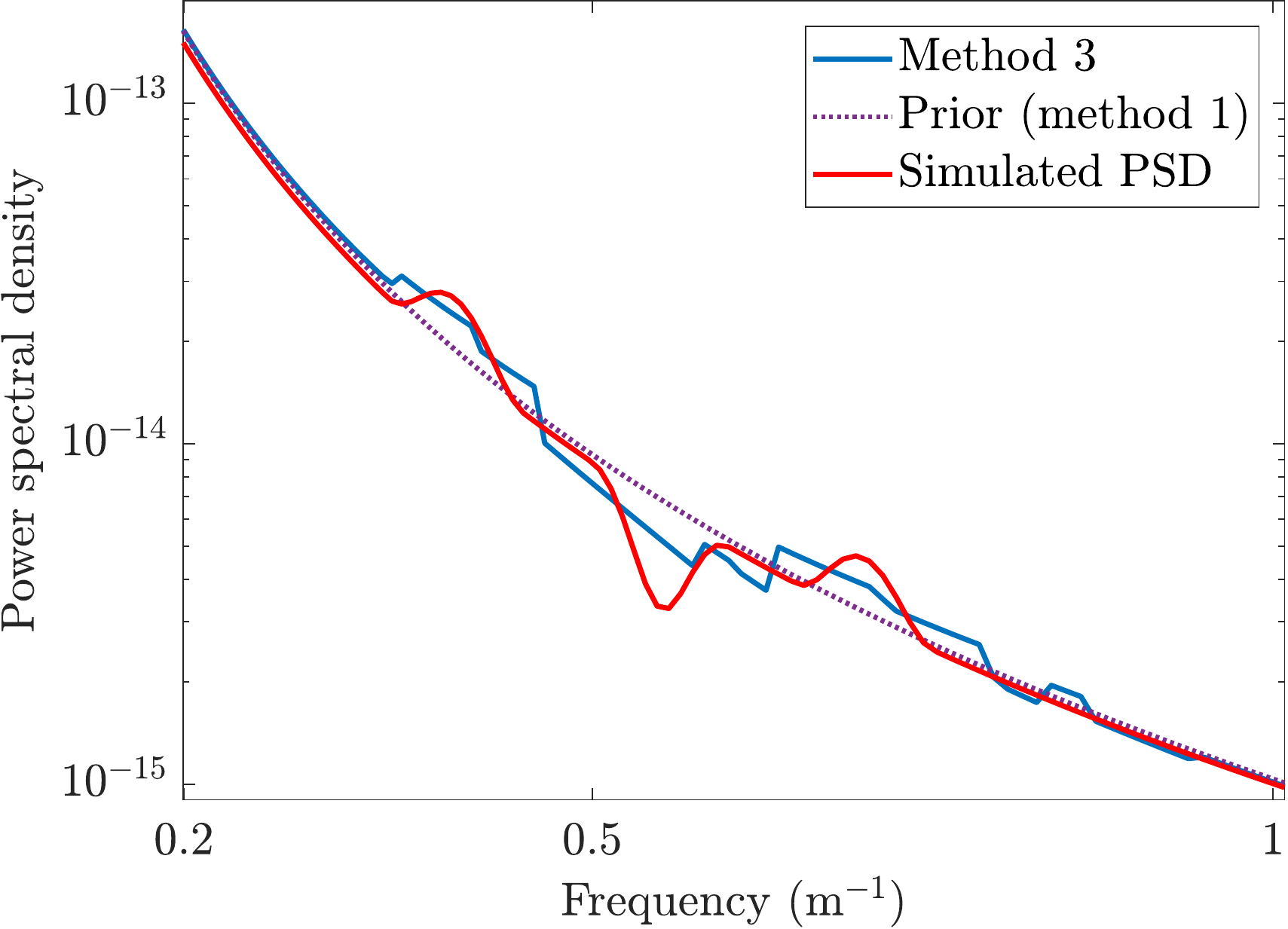}
				 \caption{Reconstruction (blue) of the ground layer PSD with bumps using method 3 (total variation regularization) with $\beta_1 = \num{5e-6}$ and $\beta_2 = \num{9e-7}$, with the simulated PSD (red) and the \emph{a priori} estimate using method 1 (purple dots) shown for reference. The image has been focused on the region with bumps, since there is virtually no difference in the reconstructions obtained by the three methods outside this region.}
				 \label{fig:psd_bumps_tv}
\end{figure}

Looking again at Table~\ref{table:errors}, we can compare the results from methods 2 and 3 to find that method 3 does slightly worse in every error metric, although the difference is quite insignificant. Overall, the biggest difference between methods 2 and 3 seems to be the visual appearance rather than numerical accuracy; it could even be argued that method 3 in this case locates the bumps more clearly, even though its accuracy is worse.

It is of course also interesting to see if we can improve reconstructions with a greater number of samples. In Fig.~\ref{fig:psd_bumps_500} we show the same results using method 2 as we did in Fig.~\ref{fig:psd_bumps}, but we have now included a second reconstruction which uses 500 timesteps of data, rather than the 50 timesteps we have used for all other results. We can see that most of the problems with the reconstruction from 50 timesteps of data are now gone and the bumps are reconstructed quite nicely. There are still some artefacts such as the small extra bump after the third bump, but overall the reconstruction has improved quite a bit. The exponent given by method 1 is now \num{-1.5572}, which is again \SI{1}{\percent} off from the true value. 

\begin{figure}
				 \includegraphics[width=0.9\textwidth]{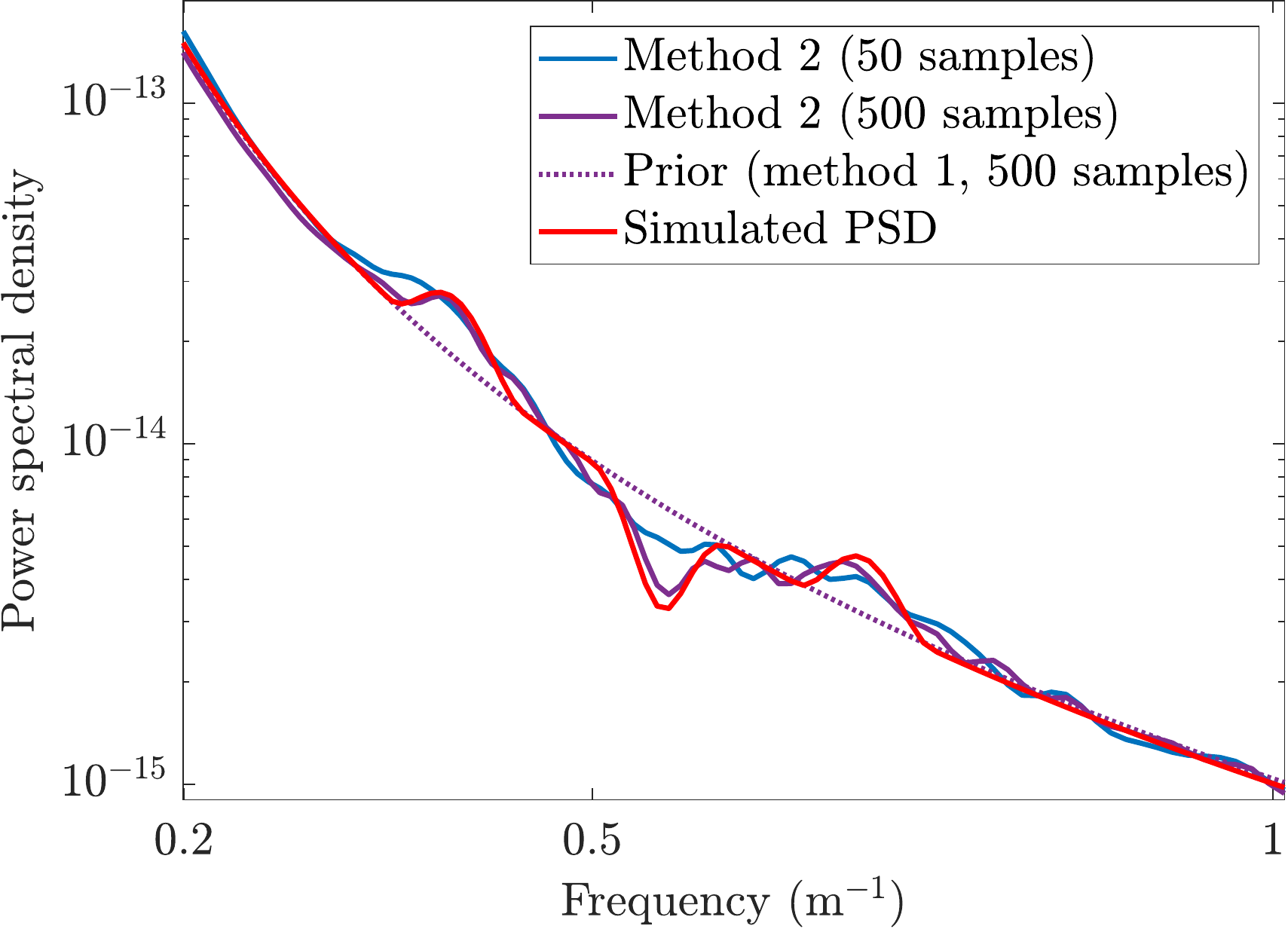}
				 \caption{Reconstruction of the ground layer PSD with bumps using method 2 (Tikhonov regularization). The figure is focused on the region with bumps, and demonstrates the difference between using 50 (blue) or 500 (purple) samples. The simulated PSD (red) and the \emph{a priori} estimate (purple dots) are again shown for reference. The bumps are reconstructed much more clearly with 500 samples, although there are still some artefacts visible. The reconstructed exponent using method 1 was \num{-1.5572}, which again gives a relative error of just \SI{1}{\percent}. The values for $\beta_1$ and $\beta_2$ are \num{5e-6} and \num{6e-7} for 50 samples (as before), and \num{1e-6} and \num{1e-7} for 500 samples.}
\label{fig:psd_bumps_500}
\end{figure}

Fig.~\ref{fig:psd_bumps_rel_err} shows the reconstructions from Fig.~\ref{fig:psd_bumps_500} in a different way, by showing the relative errors of each reconstruction compared to the simulated PSD. From Fig.~\ref{fig:psd_bumps_rel_err} we can really see that the region around the bumps is reconstructed much better with 500 timesteps worth of data; the relative error is much flatter than with 50 timesteps. Before the first bump we can see the presence of an extra bump that wasn't there with 50 timesteps, but that appears to be almost the only part where the error becomes worse.

\begin{figure}
				 \includegraphics[width=0.9\textwidth]{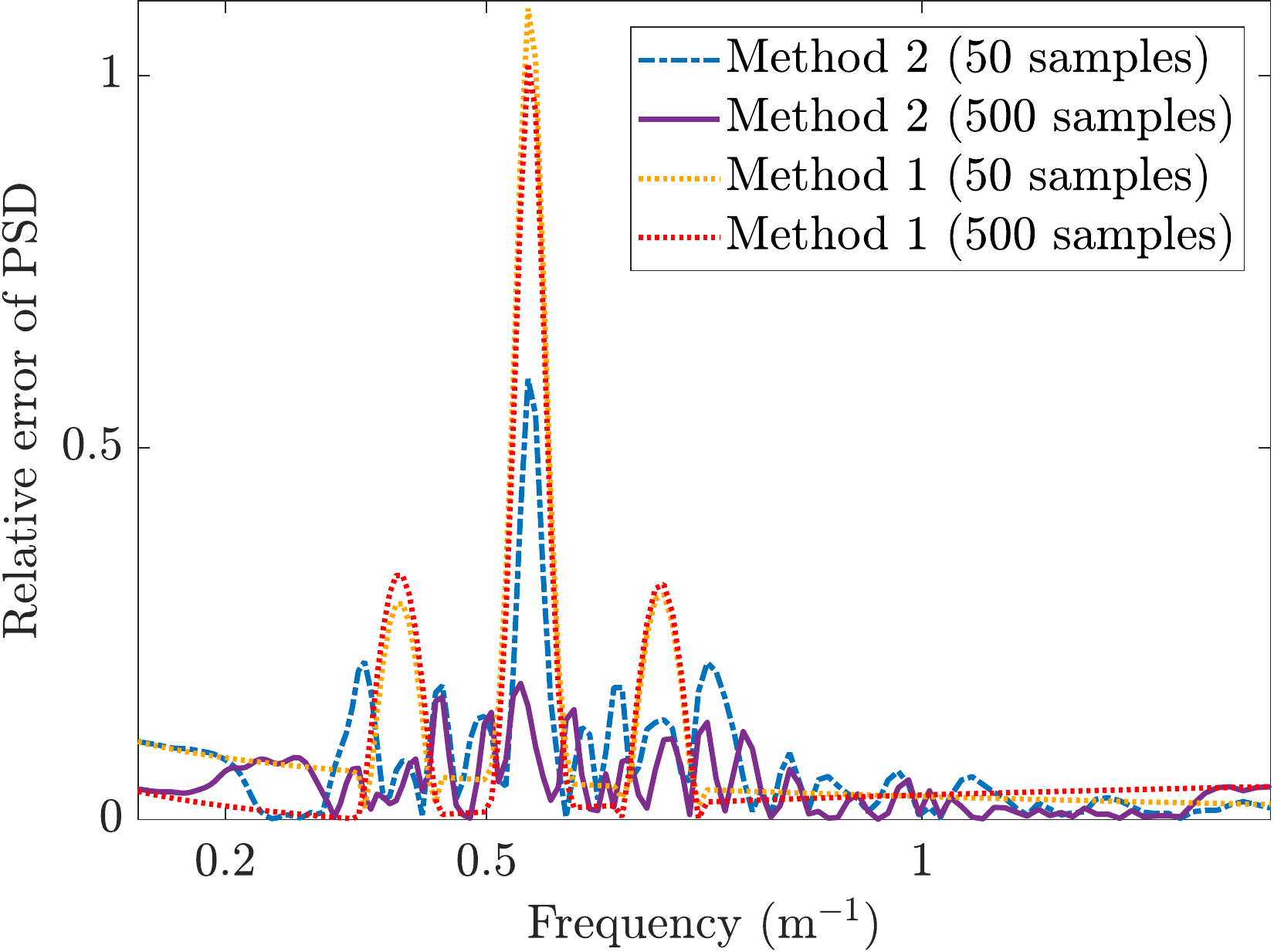}
				 \caption{Relative errors between the simulated PSD and the reconstructions obtained for 50 and 500 samples with methods 1 and 2. The figure is focused on the region where the reconstructions with method 2 differ significantly from the ones with method 1.}
				\label{fig:psd_bumps_rel_err}
\end{figure}

Finally, in Fig.~\ref{fig:cn2_bumps_500} we show the turbulence profile reconstructed by using method 2 with 500 samples, corresponding to the PSD shown in purple in Fig.~\ref{fig:psd_bumps_500}. This time the potential benefit of our method is much clearer than it was in Fig.~\ref{fig:cn2_bumps} -- the reconstruction of the first ten layers is now almost perfect, while the standard SLODAR method still struggles with them. As expected, the reconstructions at higher altitudes are also less noisy for both methods due to the increased number of samples.

\begin{figure}
				 \includegraphics[width=0.9\textwidth]{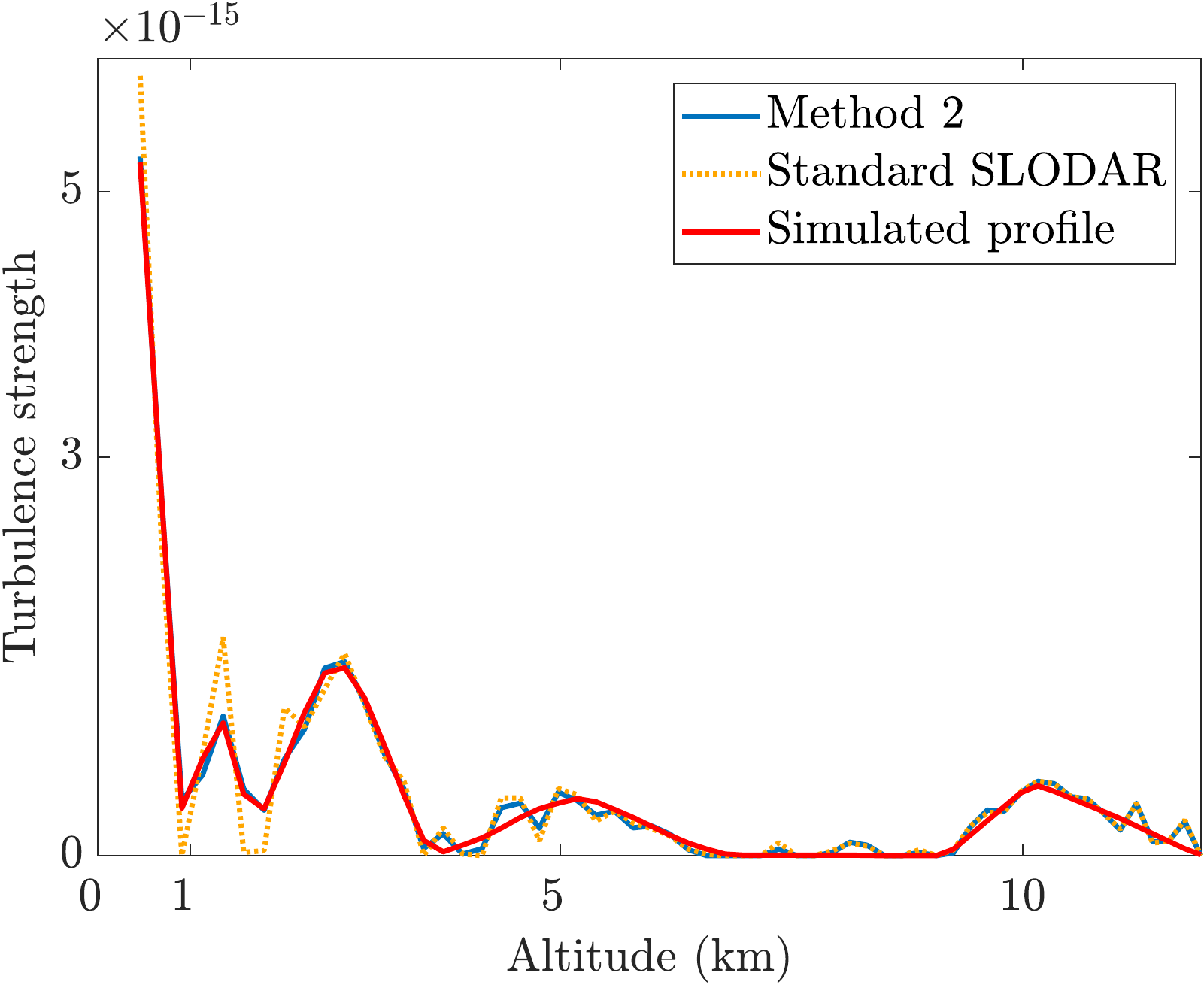}
				 \caption{Reconstruction (blue) of the turbulence profile using method 2 (Tikhonov regularization) with 500 samples on the simulated PSD with bumps; this corresponds to the reconstructed PSD shown in purple in Fig.~\ref{fig:psd_bumps}. The true turbulence profile (red) and the reconstruction using the standard SLODAR method (orange dots) are shown for reference. As in Fig.~\ref{fig:cn2_bumps}, the first two layers are omitted, but our reconstruction for the second layer is even closer to the truth than it was with 50 samples, whereas the standard SLODAR method actually gets worse results than with 50 samples.}
				\label{fig:cn2_bumps_500}
\end{figure}

Going back to Table~\ref{table:errors}, the bottom part shows the results for 500 timesteps; the regularisation parameters for method 3 with 500 samples were $\beta_1 = \num{1e-6}$ and $\beta_2 = \num{2e-7}$. Comparing to the results with 50 timesteps, we find that the ''true'' residual has dropped by a factor of almost \num{2.5}. This is more or less what we would expect, as in principle the noise level in the measurements should depend on the inverse square root of the number of measurements; thus, as we increase the number of measurements by a factor of 10, the noise should be reduced by roughly a factor of 3. The surprising thing about the numbers in the table is that the residual for SLODAR hardly drops at all, and is in fact larger than the residual we obtain using the simulated PSD and turbulence profile. This could be an indication that its residual is dominated by the modeling error caused by using the wrong PSD. The residual for method 1 is also reduced by only \SI{34}{\percent}, which is nowhere near the reduction of the true residual; this is not surprising given that the relative error in the PSD exponent is the same as with 50 samples. It is also possible that method 1, like SLODAR, starts reaching its limits in terms of the residual, as it cannot reconstruct the bumps in the PSD. For methods 2 and 3, however, we see a similar reduction as for the true residual. This is likely a result of the much better reconstruction for the bumps; we have not included the plots for method 3 with 500 samples, but the change from 50 samples is very similar to what we saw for method 2 in Fig.~\ref{fig:psd_bumps_500}.

Surprisingly, Table~\ref{table:errors} shows that both the PSD errors and the values for the second turbulent layer are worse than with 50 samples. This is however most likely due to the inherent variance in using only 50 samples, as there is no reason to think that less samples would in general yield better results. The change in PSD error when moving from method 1 to methods 2 and 3 is also slightly smaller, although this is mainly due to the slight deviation from the simulated PSD at the low frequencies in Fig.~\ref{fig:psd_bumps}, as the error in reconstructing the bumps is significantly smaller with 500 samples. For the turbulence profiles we see improvement across the board in the relative errors. These in fact seem to follow the same trend as the residuals, with slight improvements for SLODAR and method 1 and significant improvements for methods 2 and 3.

\subsection{Numerical results for the 9-layer model}

In this section we will show similar results with a 9-layer model to demonstrate that the quality of the results is not due to the choice of layer altitudes. Here, we only consider the results for the simulated PSD with bumps, as this is the hardest case to solve. 

Fig.~\ref{fig:cn2_bumps_9layer} is analogous to Fig.~\ref{fig:cn2_bumps} from the previous section, showing the reconstructed turbulence profile using method 2, for 50 samples of the 9-layer atmosphere. The true profile is this time shown using bars rather than a curve, because the layer altitudes are not as evenly spaced as in the 61-layer case. We can see that both standard SLODAR and our method pick out the layers at higher altitudes very well, with turbulence strength split between two layers whenever the true layer altitude lies between two SLODAR layers. Note also that the second layer of the simulated turbulence profile is located at \SI{140}{m} while the second reconstructed layer is at roughly \SI{228}{m}, which may impact the quality of ground layer reconstructions.

\begin{figure}
				 \includegraphics[width=0.9\textwidth]{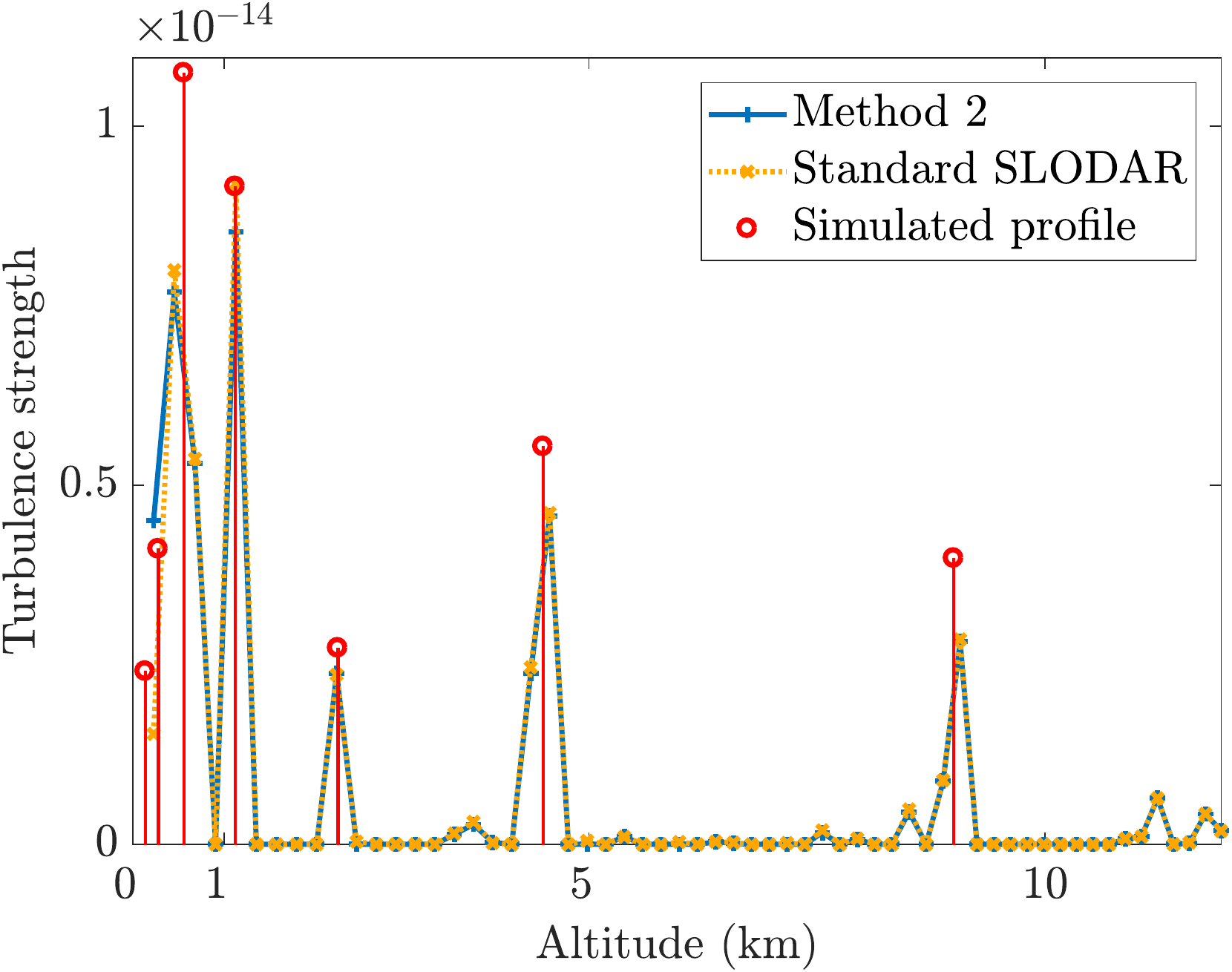}
				 \caption{Reconstruction (blue) of the turbulence profile using method 2 on the simulated PSD with bumps, using 50 samples with the 9-layer atmosphere. The true turbulence profile (red) and the reconstruction using the standard SLODAR method (orange) are shown for reference. The ground layer is omitted for each profile as before (but the second layer is included). The true profile also has a layer at \SI{18}{\kilo\meter} which is not shown here. Note that the reconstructed turbulence above \SI{10}{\kilo\meter} is likely only due to noise, as SLODAR-based methods are insensitive to turbulence located far above their maximum altitude, which in this case is roughly \SI{12}{\kilo\meter}.}
				\label{fig:cn2_bumps_9layer}
\end{figure}

The most notable feature of Fig.~\ref{fig:cn2_bumps_9layer}, however, is the stark difference between the values given by standard SLODAR and our method for the first reconstructed layer shown in the plot. Standard SLODAR predicts a very small value for this layer, even though there are two layers of the true profile with much larger turbulence strength very close to that layer. It is clear that our method gives a more accurate result close to the ground in this example.

Fig.~\ref{fig:psd_bumps_9layer} is analogous to Fig.~\ref{fig:psd_bumps_500}, showing the PSD reconstructions with method 2 for both 50 and 500 atmosphere samples at the same time. Somewhat surprisingly, the middle bump is actually reconstructed very well even with 50 samples, in stark contrast to the reconstruction shown in Fig.~\ref{fig:psd_bumps_500}. The reconstruction with 500 samples also follows the bumps even better than with the 61-layer model. However, the false bumps after the third bump are worse than previously, so much so that with 50 samples we almost miss the real third bump completely. Overall though, it is hard to say which case gives us better reconstructions.

\begin{figure}
				 \includegraphics[width=0.9\textwidth]{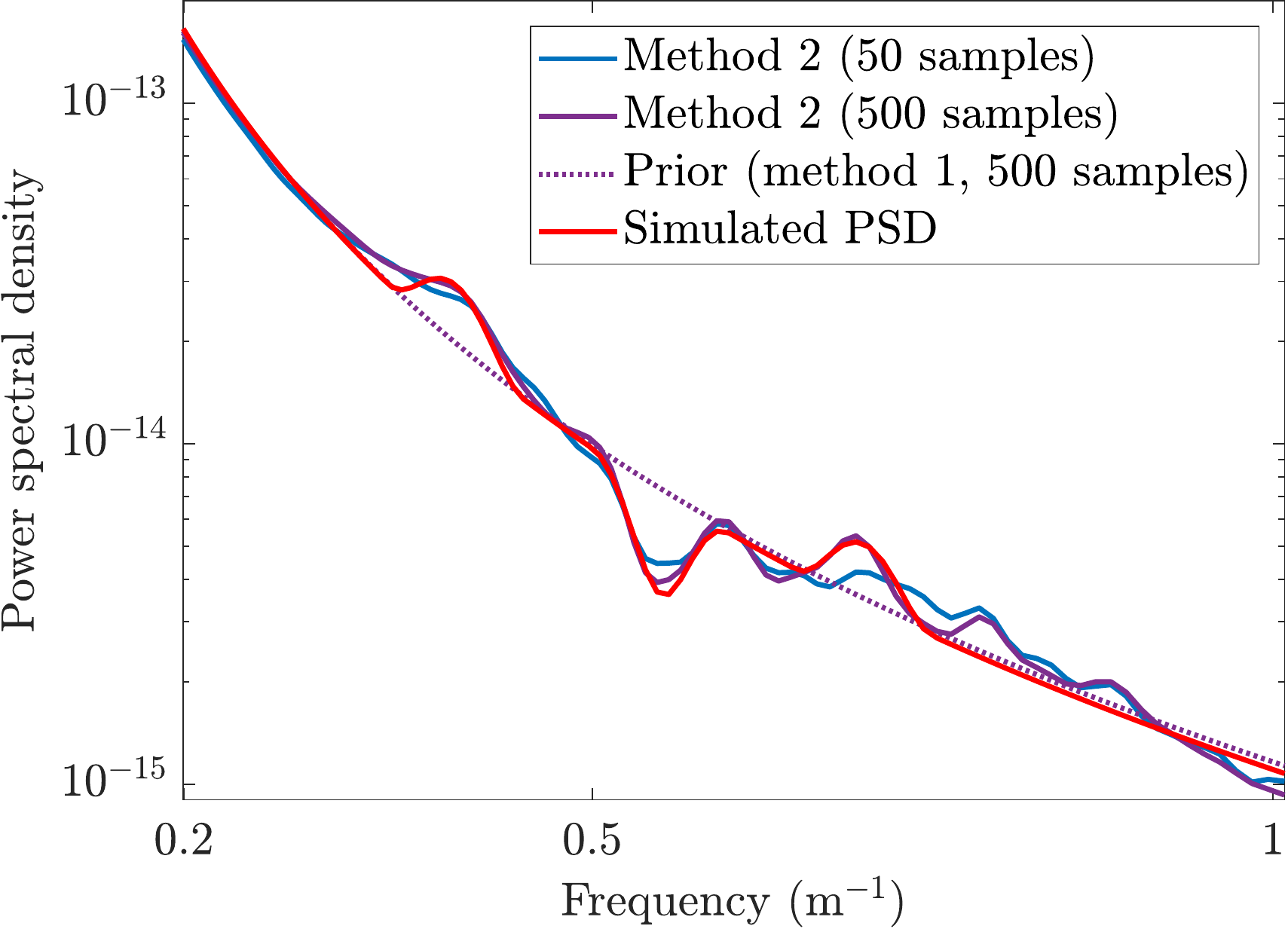}
				 \caption{Reconstruction of the ground layer PSD with bumps using method 2. The figure is focused on the region with bumps, and demonstrates the difference between using 50 (blue) or 500 (purple) samples, with the simulated PSD (red) and the \emph{a priori} estimate (purple dots) again shown for reference. The exponents given by method 1 are \num{-1.5428} for 50 samples and \num{-1.5518} for 500 samples; the corresponding relative errors are \SI{1.93}{\percent} and \SI{1.36}{\percent}, which is slightly worse than for the 61-layer model. The regularization parameters $\beta_1$ and $\beta_2$ are \num{5e-6} and \num{8e-7} for 50 samples, \num{3e-6} and \num{4e-7} for 500 samples.}
\label{fig:psd_bumps_9layer}
\end{figure}

Table~\ref{table:errors_9layer} shows the error metrics for the 9-layer atmosphere. This time we only include the residual and the PSD error, as the error in turbulence profiles cannot be computed when reconstructed and simulated layers are located at different altitudes, and there is no need to include the second layer anymore since it is visible in Fig.~\ref{fig:cn2_bumps_9layer}. The table also includes the results for method 3 and for the simulated PSD without bumps; we did not include plots for these, as they are very similar to what we saw with the 61-layer atmosphere. The regularization parameters $\beta_1$ and $\beta_2$ chosen for method 3 were \num{2e-5} and \num{3e-6} for 50 samples, \num{1e-5} and \num{2e-6} for 500 samples.

\begin{table}[tp]
\caption{Values for the error metrics given in \eqref{eq:errors}, \eqref{eq:residual} and \eqref{eq:residual_slodar} for the 9-layer atmosphere in the three cases we will consider: the PSD without bumps given by \eqref{eq:sim_psd}, and the PSD with bumps given by \eqref{eq:psd_bumps} for both 50 and 500 samples of the atmosphere. Recall that $E_{\text{res}}$ is the $L^2$-residual and $E_\text{PSD}$ is the average logarithmic $L^2$-error for PSDs. Note that the relative $L^2$-error for turbulence profiles is omitted here out of necessity, since the true profile uses different altitudes than SLODAR. The true residuals were obtained by computing a SLODAR matrix $\ma$ at the altitudes used by the 9-layer profiles.}
\vspace{2mm}
\centering
\begin{tabular}{@{}lcc@{}} \toprule
  & \multicolumn{2}{c}{Error metrics}\\ 
\cmidrule(r){2-3} 
Method/PSD & $E_{\text{res}}$ ($10^{-14}$) & $E_{\text{PSD}}$ \\ \midrule
No bumps, 50 samples  &    &    \\
\cmidrule(r){1-1}
Truth  &  4.6819  &  \NA  \\
SLODAR  &  5.3459  &  \NA  \\
Method 1    &  3.9969  &  0.0170  \\\midrule
Bumps, 50 samples  &    &    \\
\cmidrule(r){1-1}
Truth  &  5.4319  &  \NA  \\
SLODAR  &  7.2452  &  \NA  \\
Method 1    &  5.9790  &  0.0671  \\
Method 2  &  4.6707  &  0.0646  \\
Method 3  &  4.8081  &  0.0646  \\\midrule
Bumps, 500 samples  &    &    \\
\cmidrule(r){1-1}
Truth  &  1.9787  &  \NA  \\
SLODAR  &  5.6759  &  \NA  \\
Method 1    &  4.0885  &  0.0554  \\
Method 2  &  1.9204  &  0.0522  \\
Method 3  &  2.1806  &  0.0519 \\
\bottomrule
\end{tabular}
\label{table:errors_9layer}
\end{table}

The first thing to observe in Table~\ref{table:errors_9layer} is that the PSD errors are worse than with the 61-layer atmosphere across the board. The main reason for this appears to be the worse reconstructions given by method 1, as the relative errors in the PSD exponents are now \SI{1.93}{\percent} and \SI{1.36}{\percent} for 50 and 500 samples, respectively, whereas both errors were almost exactly \SI{1}{\percent} for the 61-layer atmosphere. The reduction in PSD error when moving from method 1 to methods 2 and 3 is also smaller than with the 61-layer atmosphere; this is mainly due to the few extra bumps which are visible in Fig.\ref{fig:psd_bumps_9layer}, as the simulated bumps are actually recovered better than in the 61-layer case. The source of these errors might be the simulated layer at \SI{140}{m}, as part of its turbulence may be interpreted as being located close to the ground, since the closest other option is also almost 90 meters away at \SI{228}{m}. It should also be noted that the error metrics we have chosen may not accurately represent the impact on reconstruction quality in atmospheric tomography, as a different PSD from the simulated one may be a better representation in the case where there is another layer close to the ground which follow standard von K\'arm\'an statistics. 

In terms of the residuals it also appears that we are doing worse than with the 61-layer atmosphere, as the numbers we get are larger relative to the true residual than they were before. This should of course be taken with a grain of salt, as a larger residual should naturally be expected when the simulation and reconstruction use different layer altitudes. However, we still see the same trend of improvement in the residuals as we move towards the more complex methods. In particular, both methods still provide better residuals for the data with 50 samples, which is somewhat impressive given the inherent model error from the difference in altitudes, and even with 500 samples method 3 gets close to the true residual and method 2 is even better. Overall, the results for the 9-layer atmosphere seem promising, even if they are not quite as good as with the 61-layer atmosphere.

\subsection{Robustness to parameter change}

We conclude this section by looking at the robustness of our methods to changes of $\beta_1$, $\beta_2$, $L_0$, and the number of discretization points for the ground layer. For each of these parameters, we have analyzed all six data sets presented in the previous section, using various methods to gauge how sensitive our results are to changing these parameters.

For $\beta_1$ and $\beta_2$, we determine the range of values where the reconstruction quality remains almost unchanged, and the range where we can at least locate the bumps in the PSD. We only consider changing the value of one parameter at a time. 

The parameter $\beta_1$ controls the $L^2$-penalty on the relative difference between the reconstructed and prior PSD. For all four data sets where methods 2 and 3 were used, the reconstruction quality is stable when $\beta_1$ is multiplied by a number between 0 and 3. Above this limit, the reconstruction tends towards the prior PSD, and the bumps become almost indistinguishable if $\beta_1$ is increased by a factor of 20. The main role of $\beta_1$ is thus to guarantee a stable solution, especially for method 3.

The parameter $\beta_2$ controls the $L^2$- or TV-penalty on the relative difference in derivatives between the reconstructed and prior PSD. The reconstruction quality for method 2 is stable when $\beta_2$ is multiplied or divided by $2$; for method 3, this factor drops to $1.5$. With method 2, the bumps can barely be recognized when $\beta_2$ is multiplied by 10 or divided by 5; for method 3, the corresponding factors are 3 and 4. This is certainly a point in favor of method 2, as there seems to be more leeway in choosing the value of $\beta_2$. Above these limits the solution gets too close to the prior, and below the lower limit the real bumps are obscured by noise. 

For the outer scale $L_0$, changing it from \SI{25}{\meter} to anything between \SI{13}{\meter} and \SI{10000}{\meter} leads to at most a \SI{1}{\percent} change in the coefficient and exponent given by method 1, as well as the relative turbulence profile error and the residual of all three methods. The PSD error for all methods is increased by at most $0.001$ for all $L_0$-values from \SIrange{22}{26}{\meter}; by contrast, the discretization error given by computing the PSD error for the simulated PSD is $0.0011$. For $L_0$-values from \SIrange{20}{30}{\meter} the increase in PSD error is at most $0.006$, but this is still relatively small as the PSD errors in the previous sections range from 0.017 to 0.067.

Finally, we consider the number of discretization points. We have experimented with reducing the number of points from 401 to 201 or 101 by only using every second or every fourth point, respectively. Changing the number of points has very little effect on method 1, as a smooth power law can be well represented even with a smaller number of points. Reducing the number of points to 101 changed the reconstructed exponent by at most \SI{0.1}{\percent}, and both the coefficient, residual and turbulence profile error changed by at most \SI{0.3}{\percent}. For the PSD error, the largest absolute increase was 0.0016; by contrast, the discretization error from representing the simulated PSD using 101 basis functions is 0.0047. Naturally, all of these errors are significantly smaller for 201 points, but for method 1 even as few as 101 points seems to be acceptable.

For methods 2 and 3, the regularization parameters $\beta_1$ and $\beta_2$ were multiplied by 2 for 201 points and by 4 for 101 points, as the regularization terms depend on the discretization. We found that with 201 points, the largest increases in residual, turbulence profile error and PSD error are \SIlist{2;0.16;0.65}{\percent} for method 2, and \SIlist{5.6;1;1}{\percent} for method 3. Method 3 thus clearly suffers more from reducing the number of points, although these errors are still quite reasonable. Reducing the number of points to 101, the largest increases in residual, turbulence profile error and PSD error are \SIlist{18;8.3;4.9}{\percent} for method 2, and \SIlist{17;1.8;3.5}{\percent} for method 3. Thus it seems that using less than 200 points for these two methods would be ill-advised.

\section{Conclusions}

We have generalized the SLODAR concept by introducing an isotropic turbulence layer with unknown statistics at the 
ground to the problem. We proposed an inverse problem of estimating the turbulence profile of a Kolmogorov model in the higher 
atmosphere and simultaneously the unknown power spectral density of the ground layer from the same data that SLODAR uses. 
It turns out the unknown ground layer statistics makes the problem quite ill-posed. 
It is therefore crucial to add an appropriate regularization or prior information that stabilizes the problem. 
We proposed three  regularization methods to compensate for the ill-posedness. The first method is in parametric form, 
where the unknown ground-layer power spectral density is represented by a finite-dimensional parameter space. 
The second method is non-parametric and is based on using Tikhonov regularization with a power-law favoring penalty term, 
and the third is also non-parametric but uses a mix of Tikhonov and total variation regularization.  

Our numerical results indicate that we can identify the turbulence profile and the ground layer statistics 
quite accurately, as long as the power spectral density does not deviate too much from a general von K\'arm\'an power law.
We also see a marked improvement in the turbulence profile reconstruction compared to the standard SLODAR method.

The ability to detect the power spectral density appears to be limited by the nature of SLODAR data, as the spatial frequency
detected by Shack--Hartmann wavefront sensors is limited by aliasing effects. This limitation can of course be mitigated in
the future as technological improvements allow for smaller WFS subapertures.

\section*{Acknowledgements}

TH was supported by the Academy of Finland via project 275177. 
JL acknowledges the support from the Jenny and Antti Wihuri foundation.
SK is supported by the Austrian Science Fund (FWF) project
P 30157-N31. 
TH and JL thank Gabriel Katul for fruitful discussion regarding atmospheric turbulence.
The authors wish to acknowledge CSC -- IT Center for Science, Finland, for computational resources.

\appendix
\section{A technical Lemma}\label{sec:appendix}
The function $g_h^\var(\vxi)$ in the kernel of the integral operator $K_h^\var$ 
allows for the following bound. 
\begin{lemma}\label{lemma:g}
There exists a constant $C > 0$ such that
\[ \int_{\partial B(0,r)} \abs{g_h^\var(\vxi)}^2 d\vxi \leq C(h) \]
for all $r > 1$ and $0 \leq h < H$.
\end{lemma}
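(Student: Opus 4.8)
The plan is to prove the bound by a direct estimate exploiting the geometric fact that on the circle $\partial B(0,r)$ the two coordinates $\xi_1,\xi_2$ cannot both be small, except in a thin arc of $\phi$-width $O(1/r)$ around two antipodal points. Without loss of generality take $\var = x$, so $\xi_\var=\xi_1$; the case $\var=y$ follows by interchanging $\xi_1$ and $\xi_2$. From the definition of $g_h^\var$ one has
\[ \abs{g_h^x(\vxi)}^2 = 64\pi^2\eta(h)^2\ap^4\,\xi_1^2\,\sin^4(\pi\eta(h)\ap\xi_1)\,\sinc^2(\eta(h)\ap\xi_1)\,\sinc^2(\eta(h)\ap\xi_2). \]
The first step is a purely pointwise reduction that uses nothing about $\vxi$ lying on a circle: since $\abs{t\,\sinc(t)}\le 1$ and $\abs{\sinc(t)}\le 1$ for all $t$ (for either normalization of $\sinc$), the factor $\xi_1^2\sinc^2(\eta(h)\ap\xi_1)$ is at most $\eta(h)^{-2}\ap^{-2}$, while $\sin^4\le 1$, so that
\[ \abs{g_h^x(\vxi)}^2 \le 64\pi^2\ap^2\,\sinc^2(\eta(h)\ap\xi_2) \qquad\text{for all }\vxi\in\R^2. \]
In effect the factor $\xi_1^2$ is absorbed into the sinc in the $x$-direction, leaving only the sinc in the orthogonal direction, which is where the decay on the circle will come from.

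Next I parametrize $\partial B(0,r)$ by $\vxi=(r\cos\phi,r\sin\phi)$, whose arc-length element is $r\,d\phi$, so the claim reduces to bounding
\[ r\int_0^{2\pi}\sinc^2\!\big(\eta(h)\ap\,r\sin\phi\big)\,d\phi \]
by a constant depending only on $h$ and $\ap$, uniformly in $r>1$. The one genuine obstacle is the apparent non-integrability: $\sinc^2(cr\sin\phi)\le (cr\sin\phi)^{-2}$ has a $1/\sin^2\phi$ singularity at $\phi=0,\pi$, which the crude bound cannot control. This is resolved by splitting the integral at $\abs{\sin\phi}=1/r$, where the hypothesis $r>1$ ensures $1/r<1$ so that the split is meaningful. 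On $\{\abs{\sin\phi}<1/r\}$ one uses $\sinc^2\le 1$ together with the fact that this set has measure $4\arcsin(1/r)\le 2\pi/r$, so it contributes at most $2\pi$ after multiplying by $r$. On $\{\abs{\sin\phi}\ge 1/r\}$ one uses $\sinc^2(cr\sin\phi)\le (cr)^{-2}\sin^{-2}\phi$ together with the explicit antiderivative $\int\sin^{-2}\phi\,d\phi=-\cot\phi$, which gives $\int_{\{\abs{\sin\phi}\ge 1/r\}}\sin^{-2}\phi\,d\phi = 4\cot(\arcsin(1/r)) = 4r\sqrt{1-r^{-2}} \le 4r$; after the prefactors this contributes at most $4\,\eta(h)^{-2}\ap^{-2}$. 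The crucial point is that the would-be divergent part of $\int\sin^{-2}\phi$ is of order exactly $r$, which cancels the $1/r$ in front.

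Assembling the two pieces gives
\[ r\int_0^{2\pi}\sinc^2(\eta(h)\ap\,r\sin\phi)\,d\phi \le 2\pi + 4\,\eta(h)^{-2}\ap^{-2}, \]
and therefore $\int_{\partial B(0,r)}\abs{g_h^x(\vxi)}^2\,d\vxi \le 64\pi^2\ap^2\big(2\pi + 4\,\eta(h)^{-2}\ap^{-2}\big) =: C(h)$, which is finite for every $0\le h<H$ because $\eta(h)=1-h/H>0$, and which blows up only as $h\uparrow H$ — consistent with the role of $C(h)$ in Theorem~\ref{thm:compact}. The bound for $\var=y$ is identical after swapping the two coordinates. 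I do not anticipate any difficulty beyond the $\phi=0,\pi$ singularity described above; the remainder is bookkeeping of the constants $\pi$ and $\ap$.
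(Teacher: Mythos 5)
Your argument is correct and follows essentially the same route as the paper's proof: absorb the $\xi_1$-dependent factors into a constant via $\abs{t\sinc(t)}\le 1$, leaving only $\sinc^2$ in $\xi_2$, then integrate in polar coordinates with a split at $\abs{\sin\phi}\sim 1/r$, bounding the thin arc by its measure and the rest by the $1/\sin^2\phi$ estimate. The only differences are cosmetic (you work on the full circle with the exact antiderivative $-\cot\phi$ where the paper reduces to a quarter circle and uses $\sin(1/r)\ge\sin(1)/r$), yielding a slightly different but equally valid constant $C(h)$.
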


\begin{proof}
Let us denote $c_h = \pi \eta(h) \ap$, and without loss of generality we let 
$\var = x$ and $\xi_\var = \xi_1$.
By elementary estimates for  trigonometric functions, we have
\begin{align*}
&\int_{\partial B(0,r)} \abs{g_h^x(\vxi)}^2 d\vxi = \int_{\partial B(0,r)} \abs{8 i \xi_1 \ap c_h \sin^2(c_h \xi_1) \sinc(c_h \xi_1 / \pi) \sinc (c_h \xi_2 / \pi)}^2 d\vxi \\
&\qquad \leq 64 c_h^2 \ap^2 \int_{\partial B(0,r)} \abs{\xi_1 \frac{\sin(c_h \xi_1)}{c_h \xi_1} 
\sinc (c_h \xi_2 / \pi)}^2 d\vxi \leq 64 \ap^2 \int_{\partial B(0,r)} \sinc^2(c_h \xi_2 / \pi) d\vxi.
\end{align*}
Observe that the function $\sinc^2(c_h \xi_2 / \pi)$ is constant in $\xi_1$ and even in the variable $\xi_2$,
and so the last term equals $4$ times the same expression but 
when integrated  over one quarter of the circle $\partial B(0,r)$. 
We denote the quarter of the circle where $\vxi$ is positive by $\partial B_+(0,r)$.
Now, we switch to polar coordinates and split the corresponding integral 
into two parts which we will estimate separately:
\begin{align*}
&\int_{\partial B_+(0,r)} \sinc^2(c_h \xi_2 / \pi) d\vxi = \int_0^{\pi/2} r \sinc^2(c_h r \sin(\theta) / \pi) d\theta \\
&\qquad = \int_0^{1/r} r \sinc^2(c_h r \sin(\theta) / \pi) d\theta + \int_{1/r}^{\pi/2} r \sinc^2(c_h r \sin(\theta) / \pi) 
d\theta =: I_1 + I_2.
\end{align*}
For $I_1$ we can use that  $\sinc^2(x) \leq 1$ to obtain
\[ I_1 = \int_0^{1/r} r \sinc^2(c_h r \sin(\theta) / \pi) d\theta \leq \int_0^{1/r} r d\theta = 1. \]
For $I_2$, we have 
\[ I_2 = \int_{1/r}^{\pi/2} r \sinc^2(c_h r \sin(\theta) / \pi) d\theta = \int_{1/r}^{\pi/2} r \frac{\sin^2(c_h r \sin(\theta))}{c_h^2 r^2 \sin^2(\theta)} d\theta \leq \frac{1}{c_h^2 r} \int_{1/r}^{\pi/2} \frac{1}{\sin^2(\theta)} d\theta. \]
The integral function of $1/\sin^2(\theta)$ is $-\cos(\theta)/\sin(\theta)$, which vanishes at $\pi/2$. Thus,
\[ I_2 \leq \frac{1}{c_h^2 r} \int_{1/r}^{\pi/2} \frac{1}{\sin^2(\theta)} d\theta = \frac{\cos(1/r)}{c_h^2 r \sin(1/r)} \leq \frac{1}{c_h^2 r \sin(1/r)}. \]
Finally, the estimate $\sin(1/r) \geq \sin(1)/r$ for all $r \geq 1$, yields 
 $I_2 \leq \tfrac{1}{c_h^2 \sin(1)}$, and
consequently
\[ \int_{\partial B(0,r)} \abs{g_h^\var(\vxi)}^2 d\vxi \leq 
256 \ap^2 \int_{\partial B_+(0,r)} \sinc^2(c_h \xi_2 / \pi) d\vxi \leq 
256 \left(\ap^2 + \frac{1}{\pi^2 \eta(h)^2 \sin(1)}\right). \] 
By symmetry, the same estimate holds for the case $\var = y$. Note that since $\eta(h) = 1 - \tfrac{h}{H}$, this limit is well-defined for all $0 \leq h < H$, and tends to infinity as $h \to H$.
\end{proof}

\bibliographystyle{plain} 
\bibliography{main}

\end{document}